\newtheorem{theorem}{Theorem}[section]
\newtheorem{proposition}[theorem]{Proposition}
\newtheorem{lemma}[theorem]{Lemma}
\theoremstyle{definition}
\newtheorem{def-theorem}[theorem]{Definition-Theorem}
\newtheorem{remark}[theorem]{Remark}
\newtheorem{definition}[theorem]{Definition}
\newtheorem{example}[theorem]{Example}
\newcommand{\be}{\begin{equation}}
\newcommand{\ee}{\end{equation}}
\newcommand{\bea}{\begin{eqnarray}}
\newcommand{\eea}{\end{eqnarray}}
\newcommand{\beas}{\begin{eqnarray*}}
\newcommand{\eeas}{\end{eqnarray*}}
\newcommand{\ba}{\begin{array}}
\newcommand{\ea}{\end{array}}
\newcommand{\tr}{{\rm tr}}
\begin{document}

\begin{titlepage}
\hfill
\vbox{
    \halign{#\hfil         \cr
           } 
      }  
\vspace*{10mm}
\begin{center}
{\Large \bf Finite entropy sums in quantum field theory}

\vspace*{5mm}
\vspace*{1mm}
 Mark Van Raamsdonk
\vspace*{1cm}
\let\thefootnote\relax\footnote{mav@phas.ubc.ca}

{Department of Physics and Astronomy,
University of British Columbia\\
6224 Agricultural Road,
Vancouver, B.C., V6T 1Z1, Canada
}

\end{center}
\begin{abstract}
Entropies associated with spatial subsystems in conventional local quantum field theories are typically divergent when the spatial regions have boundaries. However, in certain linear combinations of the entropies for various subsystems, these divergences may cancel, giving finite quantities that provide information-theoretic data about the underlying state. In this note, we show that all such quantities can be written as linear combinations of three basic types of quantities: i) the entropy of a spatial subsystem minus the entropy of its complementary subsystem, ii) the mutual information between non-adjacent subsystems, and iii) the tripartite information for triples of disjoint subsystems. For a fixed decomposition of a spatial slice into regions, we describe a basis of sums of entropies for collections of for these regions for which all divergences related to both region boundaries and higher-codimension intersections of regions cancel. Key mathematical technology used in this work (Fourier transforms on the Boolean cube and Möbius transformations of functions on partially ordered sets) and several of the main proof ideas were suggested by AI (ChatGPT5). We offer a few comments on the use of AI in physics and mathematics, based on our experience. 
\vspace*{1cm}

{\it Dedicated to my father, Ray Van Raamsdonk 1945-2025, who introduced me to the beauty of mathematics and who was one of AI's biggest fans since the 1980s.}

\end{abstract}
\vspace*{5mm}

\end{titlepage}

\section{Introduction}

\begin{figure}
\label{fig:ThreeTypes}
    \centering
\includegraphics[width=0.7\textwidth]{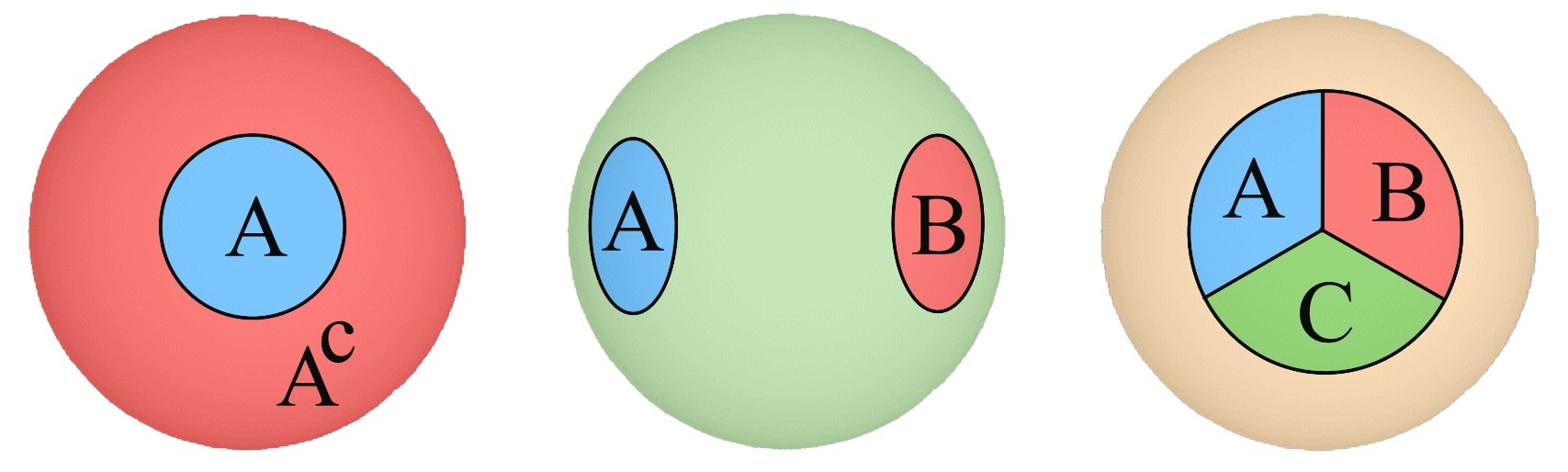}
    \caption{Example regions for the three basic types of finite entropy combinations: i) $S(A) - S(A^c)$, the entropy of a region minus the entropy of its complement ii) $I(A,B)$, the mutual information between a pair of non-adjacent regions, iii) $I(A,B,C)$ the tripartite information for a triple of disjoint regions.}
\end{figure} 

In general quantum systems, subsystem entropies\footnote{If $\rho_A$ is the reduced density operator associated with a subsystem $A$, the subsystem entropy, also known as entanglement entropy is defined by the von Neumann entropy $S_A = - \tr(\rho_A \log \rho_A)$ \cite{Wehrl1978}.} provide insight into the structure of entanglement and correlation present in the global quantum state. For local quantum field theories, such subsystem entropies are generally divergent for subsystems describing the degrees of freedom in a spatial region with a non-trivial boundary (see e.g. \cite{Bombelli1986,Srednicki1993,CasiniHuerta2009}). Physically, this divergence indicates an infinite amount of entanglement between quantum field theory UV degrees of freedom on either side of the boundary.\footnote{Additional divergences can arise in the case of infinite volume, but for this paper, we will always assume that the quantum field theory is defined on a manifold with compact spatial slices.} \footnote{More formally, the non-existence of finite subsystem entropies for regions with boundaries is related to the fact that the associated subalgebras of observables are type III; see \cite{Haag1996,BratteliRobinson1997,Witten2018} for background.}

While these individual entropies are divergent, by considering certain linear combinations of entropies associated with various regions, it is possible to define finite quantities that provide information about the quantum field theory state.\footnote{In practice, these are defined by introducing a regulator that renders the individual entropies finite, and then noting that in a limit where the regulator is removed, the linear combinations in question yield finite results that are independent of the regularization scheme.} These quantities are finite either because they measure the entropies of subsystems without boundaries or because they involve combinations of entropies whose divergences cancel between terms. For a linear combination of entropies 
\begin{equation}
\label{entsum}
\sum_i c_i S(A_i)
\end{equation}
with coefficient $c_i$ for the entropy of region $A_i$, cancellation requires that any part of the boundary of some region $A_j$ in the sum also appears as part of the boundary of other regions in the sum, in such a way that the coefficients for all such regions add to zero. Roughly, this cancellation works because the parts of the regulated entropies that diverge when the regulator is removed are local functionals of the boundary geometry for the region.

Additional divergences may arise from higher-codimension intersections of $m>2$ regions \cite{CasiniHuerta2007_Universal2p1,CasiniHuerta2009_Review,BuenoMyersWitczak2015_PRL,BuenoMyers2015_JHEP,BuenoMyersWitczak2015_Twist,KallinEtAl2014_JStatMech,StoudenmireEtAl2014_PRB,HaywardSierensEtAl2017_TrihedralPRB,BednikEtAl2019_TrihedralPRB,SeminaraSistiTonni2017_BCFT,Berthiere2019_BoundaryCornerPRB,Nishioka2018_RMP}. In order to cancel these, we require that for any region corner appearing at a multi-region intersection, the the entropy coefficients for all regions containing either this corner or the complementary corner add to zero.\footnote{Here, we only need to worry about region corners with an associated divergence. In some cases for sufficiently cuspy corners such a divergence may not appear \cite{BuenoCasiniWitczak2019_SingularJHEP}.}

In this paper, we consider a general situation where the spatial slice on which a quantum field theory state is defined is divided into $n$ regions, and study entropy sums of the form (\ref{entsum}) with terms corresponding to unions of various subsets of these regions. Three basic types of entropy sums for which codimension-one boundary divergences cancel are
\begin{enumerate}
    \item The entropy of a subsystem minus the entropy of its complement, $S(A) - S(A^c)$. 
    \item The mutual information $I(A,B) \equiv S(A) + S(B) - S(A \cup B)$ between two regions that do not share a codimension-one boundary component.
    \item The tripartite information  $I(A,B,C) = S(A) + S(B) + S(C) - S(A \cup B) - S(A \cup C) - S(B \cup C) + S(A \cup B \cup C)$ for  three disjoint regions.
\end{enumerate}
Simple examples of regions for which each of these three types apply are shown in Figure \ref{fig:ThreeTypes}. In this paper, we
\begin{itemize}
    \item Show that all entropy sums with cancellation of codimension-one boundary divergences are linear combinations of these three basic types of quantities. 
    \item Describe two different explicit bases for the space of  entropy sums with cancelling codimension one divergences. For a given set of $n$ regions with $B$ pairs having a codimension-one intersection, the dimension of this space is $2^n - B -1$.
    \item
    Describe an explicit basis for the smaller space of entropy sums which additionally cancel all divergences associated with higher codimension features. The dimension of this space is $2^n - |I_E| - 1$, where $|I_E|$ is the number of even order subsets of regions meeting at some intersection. 
\end{itemize} 

When taking into account divergences associated with higher codimension intersections, we show that finiteness of a mutual information generally requires that the two regions are non-adjacent i.e. do not meet even at a higher codimension intersection. Finiteness of a tripartite information generally requires that there is no intersection where the three participating regions intersect with a fourth region. We note however, that in special cases, mutual informations or tripartite informations that would generically be divergent based on their intersection properties can be finite if region corners participating in the intersection is sufficiently cuspy \cite{CasiniHuerta2007_Universal2p1} (see Figure \ref{fig:Examples}).

The type 1 entropy combination can be related to mutual information if we introduce a purifying system $P$ such that the state of the quantum field theory together with the purifying system is pure. In this case, we have
\[
S(A) - S(A^c) = [S(A^c \cup P) - S(A^c) - S(P)] + S(P) = I(A^c,P) + S(\Sigma) \; .
\]

The tripartite information can also formally be written in terms of mutual information, for example via
\[
I(A,B,C) = I(A,B) + I(A,C) - I(A , B \cup C) 
\]
but the individual mutual information terms here are only finite if A not adjacent to either B or C. Thus, for triples of regions where each is adjacent to at least one of the others, the tripartite information should be considered as a basic quantity. The tripartite information  $I(A,B,C)$ makes an important physical appearance in the definition of topological entanglement entropy \cite{KitaevPreskill2006,LevinWen2006} and has been shown to be positive in holographic quantum field theories \cite{hayden2013holographic}.

\paragraph{Proof strategy} In the remainder of the introduction, we outline the basic steps that we use to prove our main results. In the following, we begin by considering only divergences associated with codimension-one intersections of regions, then generalize to the higher codimension case.
\begin{enumerate}
\item We begin by noting that for any spatial geometry $\Sigma$ divided into regions $A_i$, we can represent the relevant information about the number of regions and their codimension-one intersections though a graph $G$ with one vertex for each region and an edge between vertices $i$ and $j$ corresponding to regions that are adjacent.
\item Next, we note that linear combinations of entanglement entropies for various regions are in one-to-one correspondence with functions that assign a real number to each subset of vertices; the value of the function gives the coefficient of the entropy for region formed by the union of regions associated with the vertices in the subset. This set of functions forms a real vector space ${\cal T}_\mathbb{R}$ with a natural inner product $(T_1,T_2) = \sum_\sigma T_1(\sigma) T_2(\sigma)$.
\item 
The set of functions we are interested is a subspace ${\cal T}_G$ of ${\cal T}_\mathbb{R}$. We show that ${\cal T}_G$ is the subspace orthogonal to particular set of functions $g_e$ associated with the edges of the graph, or alternatively the kernel of a certain operator ${\cal E}$ mapping ${\cal T}_\mathbb{R}$ to the vector space $\mathbb{R}^{|E|}$ of real functions on the set of edges. For $T \in {\cal T}_\mathbb{R}$, the component of ${\cal E}(T)$ corresponding to an edge $e$ is the sum of all coefficients $T(\sigma)$ where $e$ is in the cut of $\sigma$ (i.e. has one vertex in $\sigma$ and one vertex not in $\sigma$). This corresponds to the net number of times a certain boundary component appears in the sum over entropies associated with $T$. We need this to vanish in order that the divergences can cancel.
\item 
In order to characterize the kernel of ${\cal E}$, we recall that there is a natural analog of the Fourier transform for functions on subsets of a set (sometimes known as functions on the {\it Boolean cube}). Any function in ${\cal T}_\mathbb{R}$ can be expanded in a Fourier basis that is also labeled by subsets of the set (in our case, the set of vertices of $G$). We show that a function is in the kernel of ${\cal E}$ if and only if the Fourier coefficients for basis elements corresponding to edges of the graph are all equal to the Fourier coefficient corresponding to the empty set. Based on this, we give simple orthogonal bases for ${\cal T}_G$ and for its orthogonal complement ${\cal T}_G^\perp$ constructed from the Fourier basis elements (Theorem \ref{thm:Fourier}). The size of this basis, and thus the dimension of ${\cal T}_G$ is $2^{|V|} - |E|$.\footnote{This is one larger than the dimenion of the space of finite entropy sums since it includes a function that assigns 1 to the empty set and 0 to all other subsets.}
\item 
Functions in this Fourier-based orthogonal basis are somewhat unwieldy since they have non-zero coefficients for every subset. We show that an alternative basis can be chosen from functions corresponding to the three types of entropy combinations described above, plus a function that assigns 1 to the empty set and 0 to all other subsets (Theorem \ref{thm:Möbius}). First, defining ${\cal T}_G^*$ to be the span of these four special types of functions, we verify that ${\cal T}_G^* \subset {\cal T}_G$.\footnote{Throughout this paper $\subset$ is taken to mean any subset, proper or improper. We also occasionally use $\subseteq$ to mean the same thing.}  We complete our demonstration by showing that $({\cal T}_G^*)^\perp \subset {\cal T}_G^\perp$. To do this, we make use of one more mathematical tool, the idea of a Möbius transform for functions on a partially ordered set (in this case, the set of subsets of vertices). Any function in ${\cal T}_\mathbb{R}$ can be expanded in terms of a Möbius basis that is again labeled by subsets of a vertices. We show that orthogonality to the various types of functions in ${\cal T}_G^*$ implies simple constraints on the Möbius coefficients. The full constraint $T \in ({\cal T}_G^*)^\perp$ implies $T$ has a very simple form in terms of Möbius basis elements associated with vertices or edges. It is easy to show that all such functions are in ${\cal T}_G^\perp$. 
\item 
Finally, we generalize to consider higher-codimension intersections of three or more regions. These may be captured by promoting our graph to a {\it hypergraph} $H$, which has, in addition to vertices and edges, sets of $m$-vertex collections $H_m$ ($m > 2$) representing the higher-order intersections. We show that our divergence-cancellation condition for these intersections may also be expressed very simply in terms of the Fourier coefficients, and leads to the additional restriction on mutual information and tripartite information noted above. The primary results are expressed in Theorem \ref{thm:Fourier2} and Proposition \ref{thm:Möbius2}.
\end{enumerate}

Examples of various simple sets of regions, together with their corresponding graphs and a list of independent finite entropy combinations are shown in figure \ref{fig:graphs}. The various basis function  for functions on the Boolean cube corresponding to a set of 3 elements are displayed in Figure \ref{fig:Boolean1}.

\begin{figure}
\label{fig:graphs}
    \centering
\includegraphics[width=0.9\textwidth]{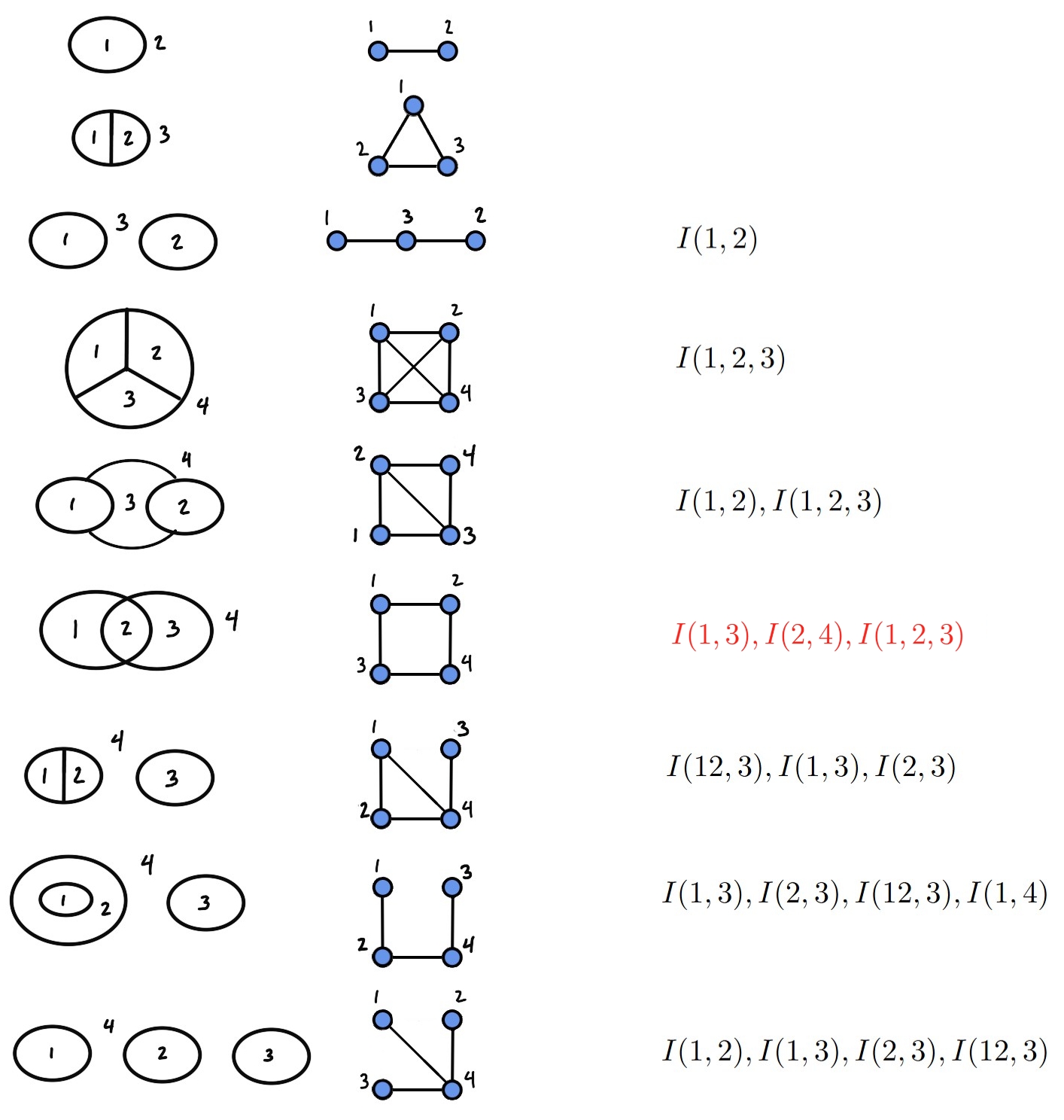}
    \caption{Example regions, associated graphs, and bases of entropy sums for which divergences associated with of codimension-one boundaries cancel. Here, $I(A,B) = S(A) + S(B) - S(A,B)$ and $S(A,B,C) = S(ABC) - S(AB) - S(BC) - S(AC) + S(A) + S(B) + S(C)$. In addition to the quantities shown we have also the quantity $S(R) - S(R^c)$ for any region $R$. Quantities in red have divergences associated with higher-order intersections and must be excluded to form a basis of fully finite entropy sums.}
\end{figure} 

Here is an outline of the rest of the paper. In section 2, we provide the basic setup and reformulate the problem of finding entropy sums with cancelling divergences as a problem of characterizing a certain map on the space functions assigning real numbers to subsets of vertices of a graph. In section 3, we state and prove our main results related to the cancellation of divergences associated with codimension-one boundaries, describing two different ways to construct a basis for the space of sums where these divergences cancel. In section 3, we take into account the possibility of additional divergences associated with higher-codimension intersections of multiple regions, showing that cancellation of these divergences gives additional constraints on Fourier coefficients that are equivalent to additional restrictions on the regions that may be appear in mutual and tripartite informations. In section 5, we note that, allowing for complex coefficients in the entropy sums, there is a one-to-one correspondence between finite entropy sums for $n$ regions and ground state wavefunctions for a certain Hamiltonian on $n$ qubits. Thus, our results can also be understood as characterizing the space of ground states for this quantum mechanics problem. We end in section 6 with a brief discussion of how the divergence-free entropy sums considered in this paper may be defined algebraically without reference to divergent entropies.

We emphasize that the new results in this paper (in particular, Theorems \ref{thm:Fourier}, \ref{thm:Möbius}, \ref{thm:Fourier2}) are not quantum field theory statements but rather rigorous statements about the space of functions assigning numbers to subregions of a manifold or to subsets of vertices of a graph / hypergraph. We precisely characterize all entropy sums satisfying conditions that are widely believed to be necessary and sufficient for cancellation of divergences in a wide class of quantum field theories, but we make no attempt to 
prove that the conditions we start with are necessary and/or sufficient for divergence cancellation in any particular quantum field theory. 

\paragraph{A note about the use of AI} This paper represents the first instance for the author where the use of AI tools was an essential component of the work. A computer analysis (coded by Google Gemini 2.5) analyzing all graphs up to 7 vertices and verifying that the functions in ${\cal T}_G^*$ span all of ${\cal T}_G$ in each case provided initial strong evidence for the results of section 3. A prompt to Chat GPT5-Thinking giving the statement of Theorem \ref{thm:Möbius} as a conjecture (in graph theory language) and requesting a proof produced a proof sketch that contained essentially all the main ideas of the final proof presented in section 3, including the statement and proof sketch of Theorem \ref{thm:Fourier}. The content in section 4 was suggested after a prompt asking for suggestions of natural extensions of the work. Here, after supplying the cancellation conditions in Definition 5.1, GPT5 suggested both the main results in Theorems \ref{thm:Fourier2} and \ref{thm:Möbius2} and the basic structure of the proofs. As an example, the transcript of the conversation leading to section 3 may be found here \cite{GPT5}.

In all cases, the line-by-line proof details presented here were constructed by the author. It seems important to point out that GPT5 was {\it not} reliable in providing proof details. In several cases during the present project, prompting of GPT5 to produce some detailed part of a proof gave results that were sloppy or incorrect. In one situation, the model supplied two alternative ``proofs'' for a conjecture that turned up to be false.
While AI models are certainly capable of producing a correct proof in many cases, they also appear to excel at making incomplete proofs sound convincing or producing the most convincing possible argument for a false statement. Thus, the author recommends {\it extreme caution} when evaluating the details of an argument/proof provided by AI and suggests fully reconstructing the details in any consequential situation.

At this point, the author would heartily endorse AI as a valuable resource to suggest relevant mathematical tools and proof ideas, to carry out numerical checks, to check for typos or errors in an argument, and to suggest related previous work or potential extensions of a project. On the other hand, the author cautions that trusting the details of an AI proof without independent expert verification is akin to dancing with the devil.

\section{Graph theory formulation}

Let $\Sigma$ be a spatial slice of some spacetime geometry on which a quantum field theory is defined. Consider $n$ non-intersecting open regions $A_i$ such that the union of their closures is $\Sigma$. For a subset $\sigma \subset \{1,\dots, n\}$, we let $A_\sigma$ be the region formed by the interior of the closure of $\cup_{i \in \sigma} A_i$ and $S(\sigma)$ represent the (regulated) entropy for region $A_\sigma$. For a given collection of regions, we would like to characterize the space of linear combinations of the form 
\[
\sum_{\sigma} c_\sigma S_\sigma
\]
for which all divergences associated with the codimension-one boundary regions cancel. We will return to divergences associated with higher-codimension intersections in section 4. The cancellation of divergences requires that for any boundary component $B_{ij}$ defined by the intersection of boundaries of elementary regions $A_i$, the sum of coefficients for all regions $A_\sigma$ with $B_{ij}$ in their boundary vanishes.\footnote{We expect that these conditions are necessary and sufficient for cancellation provided that there exists a regularization scheme where i) the divergences can be cancelled by adding a state-independent regulator-dependent term that depends only on the (intrinsic or extrinsic) boundary geometry and ii) the regulated entropy of a region is equal to the entropy of the complementary region for a pure state.} That is, we require
\begin{equation}
\label{entcond}
\sum_{\sigma | B_{ij} \in \partial A_{\sigma}} c_\sigma = 0 \qquad \forall B_{ij} \; .
\end{equation}
We will show that all sums above with coefficients satisfying this condition can be written as a linear combination of the four types of quantities described in the introduction, namely the entropies of full connected components, entropy differences between regions and their complements, mutual informations between non-adjacent regions, and tripartite informations for triples of disjoint regions.  

\paragraph{Graphs} For a given set of subsystems $A_i \subset \Sigma$, the only geometrical property that comes into our conditions (\ref{entcond}) on coefficients in the entropy sum are the number of regions and their adjacency properties. This information can be conveniently captured using a graph  (see, e.g., \cite{Diestel2017}). For a set of $n$ regions subsystems $\{A_i\}$ of $\Sigma$, we define an associated graph $G$ on $n$ vertices with an edge between vertices $i$ and $j$ if and only if the regions $i$ and $j$ are adjacent (share a codimension one boundary component). An example is shown in Figure \ref{fig:GtoG}.

\begin{figure}
\label{fig:GtoG}
    \centering
\includegraphics[width=0.9\textwidth]{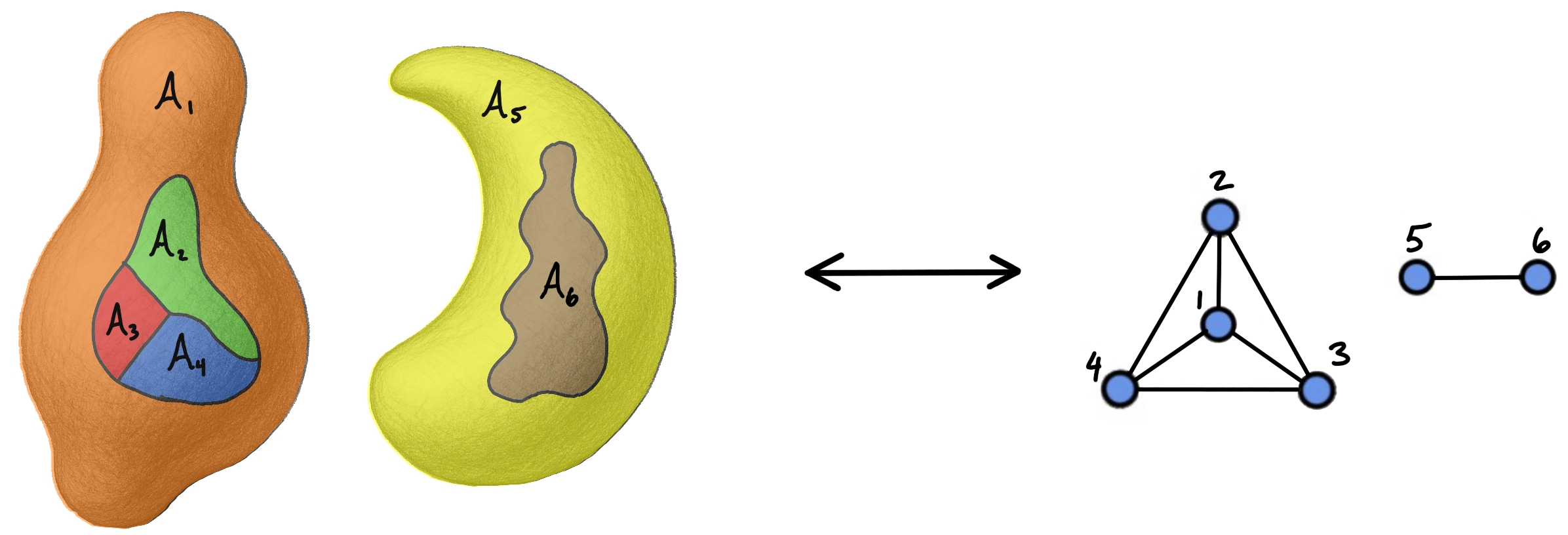}
    \caption{Example of a (disconnected) spatial manifold $\Sigma$ divided into regions $A_i$, with the associated graph $G$ capturing the adjacency properties of the regions.}
\end{figure} 

We can associate a general linear combinations of entropies with maps $T:2^V \to \mathbb{R}$ from subsets of vertices of $G$ to real numbers by
\[
T \leftrightarrow \sum_\sigma T(\sigma) S(\cup_{i \in \sigma} A_i) \; .
\]
Here, we could demand that the value of $T$ on the empty set, corresponding to the entropy of the empty region is 0, but we will mostly drop this restriction and consider the value as some arbitrary constant that can be added to the entropy sum.

The space of all such functions on subsets of vertices is a real vector space that we denote by ${\cal T}_\mathbb{R}$. The set of functions corresponding to entropy sums with cancelling divergences is a subspace ${\cal T}_G \subset {\cal T}_\mathbb{R}$ that we would like to characterize. 

It will be convenient to define an inner product on ${\cal T}_\mathbb{R}$ by
\begin{equation}
\label{inner}
(T_1,T_2) = \sum_{\sigma} T_1(\sigma) T_2(\sigma)
\end{equation}
With this inner product, there is a simple orthonormal basis for ${\cal T}_\mathbb{R}$:
\begin{definition} (The indicator basis)
Define $[\sigma]$ to be the function that assigns $1$ to the subset $\sigma$ and $0$ to all other subsets. 
\label{def:indicator}
\end{definition}
\begin{proposition}
The set of functions $\{[\sigma]\}$ for all subsets $\sigma$ of vertices form an orthonormal basis of ${\cal T}_\mathbb{R}$.
\end{proposition}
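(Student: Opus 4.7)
The plan is to verify the two defining properties of an orthonormal basis directly from the definitions, since the statement is essentially the standard fact that indicator functions on a finite set form an orthonormal basis of the associated function space under the counting inner product.

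First I would establish spanning. Given any $T \in {\cal T}_\mathbb{R}$, I would write down the candidate expansion
\[
T = \sum_{\sigma \subset V} T(\sigma) \, [\sigma]
\]
and check equality by evaluating both sides on an arbitrary subset $\tau \subset V$. By Definition \ref{def:indicator}, $[\sigma](\tau) = 1$ if $\sigma = \tau$ and $0$ otherwise, so the right-hand side collapses to $T(\tau)$, matching the left-hand side. Thus every function in ${\cal T}_\mathbb{R}$ lies in the span of the $[\sigma]$'s.

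Next I would verify orthonormality using the inner product (\ref{inner}). For subsets $\sigma, \tau \subset V$,
\[
([\sigma],[\tau]) = \sum_{\rho \subset V} [\sigma](\rho) \, [\tau](\rho) .
\]
The summand is nonzero only if $\rho = \sigma$ and $\rho = \tau$ simultaneously, which forces $\sigma = \tau$ and $\rho = \sigma$, contributing a single $1$. Hence $([\sigma],[\tau]) = \delta_{\sigma \tau}$, which gives both pairwise orthogonality and unit norm. Linear independence then follows from orthonormality in the usual way, completing the proof.

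There is no real obstacle here: the statement is a direct unpacking of definitions, and the only thing to be careful about is making sure the sum in the inner product ranges over all $2^{|V|}$ subsets (so there is no double-counting or missed index). The proposition is essentially a sanity check that fixes notation and the counting inner product before the more substantive Fourier and Möbius bases are introduced.
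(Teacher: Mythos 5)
Your proof is correct and follows essentially the same route as the paper: expand $T = \sum_\sigma T(\sigma)[\sigma]$ to get spanning, and read off $([\sigma],[\tau]) = \delta_{\sigma\tau}$ directly from the definition of the inner product. The paper's proof is just a terser version of yours.
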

\begin{proof}
   For any $T$, we can write
\[
T = \sum_\sigma T(\sigma) [\sigma] \; .
\]
The orthonormality $([\sigma],[\tau]) = \delta_{\sigma,\tau}$ is immediate from the definition of the inner product.
\end{proof}
We can describe the subspace ${\cal T}_G \subset {\cal T}_\mathbb{R}$ of functions corresponding to entropy sums with cancelling divergences as the subspace orthogonal to a set of vectors. To define these, we introduce some standard graph theory technology: 
\begin{definition}
There is a natural map $\delta: 2^V \to 2^E$ known as the {\it cut map}. For a subset $\sigma$ of vertices, $\delta \sigma$ is the subset of edges with exactly one endpoint in $\sigma$. 
\end{definition}
The cut map is relevant to our problem since an edge $e = \{ij\}$ is in the cut of some subset $\sigma$ if and only if $A_\sigma$ if the  boundary region $B_e$ is part of the boundary of $A_\sigma$. We now have
\begin{proposition}
\label{inner}
    For each edge $e$, define a function $g_e \in {\cal T}_\mathbb{R}$ by 
    \[
    g_e(\sigma) = \left\{ \begin{array}{ll} 1 & \qquad e \in \delta \sigma \cr
    0 & \qquad {\rm otherwise} \end{array} \right.
    \]
    where $\delta$ is the cut map. Then ${\cal T}_G$ is the subspace of ${\cal T}_\mathbb{R}$ defined by $(g_e,T) = 0$ for all $e$.
\end{proposition}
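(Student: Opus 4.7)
The plan is to unwind the definitions of ${\cal T}_G$ and the cancellation condition (\ref{entcond}) in the language of the graph $G$, and then to observe that the inner product $(g_e, T)$ computes exactly the left-hand side of (\ref{entcond}) at the edge $e$.

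First I would fix the dictionary. A function $T \in {\cal T}_\mathbb{R}$ corresponds to the entropy sum $\sum_\sigma T(\sigma) S(A_\sigma)$, so the coefficient $c_\sigma$ in (\ref{entcond}) is just $T(\sigma)$. A codimension-one boundary component $B_{ij}$ arises from the intersection of the boundaries of the elementary regions $A_i$ and $A_j$, which by construction of $G$ corresponds precisely to an edge $e = \{i,j\} \in E$. The key geometric observation is that $B_e \subset \partial A_\sigma$ if and only if exactly one of $i$ and $j$ lies in $\sigma$: if both or neither are in $\sigma$, then $B_e$ is either contained in the interior of $A_\sigma$ or in the interior of its complement, and in either case does not appear in $\partial A_\sigma$. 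But "exactly one endpoint in $\sigma$" is precisely the condition $e \in \delta \sigma$.

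Next I would compute the inner product directly from the definition (\ref{inner}):
\[
(g_e, T) = \sum_{\sigma} g_e(\sigma) T(\sigma) = \sum_{\sigma : e \in \delta \sigma} T(\sigma) = \sum_{\sigma : B_e \in \partial A_\sigma} c_\sigma.
\]
Thus the condition $(g_e, T) = 0$ for every edge $e$ is literally the same as the cancellation condition (\ref{entcond}) for every boundary component $B_{ij}$. This gives one inclusion in each direction simultaneously: $T \in {\cal T}_G$ iff $(g_e, T) = 0$ for all $e \in E$.

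The argument is essentially tautological once the graph dictionary is set up, so there is no real obstacle; the only point that deserves care is the geometric claim that the codimension-one boundary components of $A_\sigma$ are precisely $\{B_e : e \in \delta \sigma\}$. This relies on the running assumption that the $A_i$ are non-intersecting open regions whose closures cover $\Sigma$ and on the definition of $G$ as encoding exactly the adjacency data, so it should be stated once and then used to pass freely between the condition on boundary components indexed by $B_{ij}$ and the condition on edges indexed by $e$.
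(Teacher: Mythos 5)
Your proof is correct and follows essentially the same route as the paper: compute $(g_e,T)=\sum_{\sigma\,:\,e\in\delta\sigma}T(\sigma)$ and identify $e\in\delta\sigma$ with $B_e\subset\partial A_\sigma$, so that vanishing of all $(g_e,T)$ is precisely condition (\ref{entcond}). The extra care you take in justifying the geometric dictionary is a welcome elaboration but not a departure from the paper's argument.
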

\begin{proof}
   We have that 
   \[
   (g_e, T) = \sum_{\sigma | e \in \delta \sigma} T(\sigma)
   \]
   Since an edge $e$ being in the cut of $\sigma$ corresponds to the boundary component $B_e$ being in the boundary of $A_\sigma$, this exactly gives the  condition (\ref{entcond}).
\end{proof}
An alternative characterization of ${\cal T}_G$ that will be useful to us is as the kernel of a certain operator.
\begin{definition}
Defining $2^S$ as the set of subsets of a set $S$ of $n$ elements (that we will take to be $V$ or $E$), it will be convenient to define the corresponding {\it characteristic vector} $x_\sigma$ for a subset $\sigma$ to be the element of $\mathbb{R}^n$ whose $i$th component is 1 or 0 depending on whether or not the $i$th element of $S$ is in $\sigma$.
\end{definition}

\begin{definition}
Define the linear map ${\cal E}:{\cal T}_{\mathbb{R}} \to \mathbb{R}^{|E|}$ by its action on the indicator basis $\{[\sigma]\}$ via
\[
{\cal E} \cdot [\sigma] = x_{\delta \sigma} \; .
\]
where $\delta$ is the cut map and $x_{\delta \sigma}$ is the characteristic vector for $\delta \sigma$. 
\end{definition}

\begin{proposition}
The space ${\cal T}_G$ is the kernel of the map ${\cal E}$.
\end{proposition}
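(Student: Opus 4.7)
The plan is to unpack both sides of the claimed equality in the indicator basis and show that the condition ${\cal E}(T)=0$ is term-by-term identical to the orthogonality conditions $(g_e,T)=0$ that define ${\cal T}_G$ in Proposition \ref{inner}.

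First I would expand an arbitrary $T\in {\cal T}_\mathbb{R}$ as $T=\sum_\sigma T(\sigma)[\sigma]$ and use linearity of ${\cal E}$ together with its definition on the indicator basis to write
\[
{\cal E}(T)=\sum_\sigma T(\sigma)\, x_{\delta\sigma}\in\mathbb{R}^{|E|}.
\]
Reading off the $e$-th component of the characteristic vector $x_{\delta\sigma}$, which by definition equals $1$ when $e\in\delta\sigma$ and $0$ otherwise, I would rewrite the $e$-th component of ${\cal E}(T)$ as
\[
\bigl({\cal E}(T)\bigr)_e=\sum_{\sigma\,:\,e\in\delta\sigma} T(\sigma).
\]

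Next I would compare this with the inner product $(g_e,T)$. Since $g_e(\sigma)=1$ exactly when $e\in\delta\sigma$ and vanishes otherwise, the definition (\ref{inner}) of the inner product gives $(g_e,T)=\sum_{\sigma\,:\,e\in\delta\sigma} T(\sigma)$, matching $({\cal E}(T))_e$ exactly. Therefore ${\cal E}(T)=0$ in $\mathbb{R}^{|E|}$ if and only if $(g_e,T)=0$ for every edge $e$, which by Proposition \ref{inner} is the defining condition of ${\cal T}_G$. Hence $\ker{\cal E}={\cal T}_G$.

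There is no real obstacle here: the statement is essentially a bookkeeping reformulation of Proposition \ref{inner}, repackaging the family of scalar conditions $(g_e,T)=0$ as the single vector-valued condition ${\cal E}(T)=0$. The only thing to be careful about is matching the indexing conventions for the standard basis of $\mathbb{R}^{|E|}$ with the entries of the characteristic vectors $x_{\delta\sigma}$, so that the identification of $({\cal E}(T))_e$ with $(g_e,T)$ is unambiguous.
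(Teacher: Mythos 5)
Your proposal is correct and follows essentially the same route as the paper: both expand $T$ in the indicator basis, identify $(x_{\delta\sigma})_e$ with $g_e(\sigma)$, and reduce ${\cal E}(T)=0$ to the conditions $(g_e,T)=0$ for all edges, then invoke Proposition \ref{inner}. No gaps.
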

\begin{proof}
    We note that $(x_{\delta \sigma})_e = g_e(\sigma)$ by the definitions. Then
    \begin{eqnarray*}
        {\cal E} \cdot T = 0 
        &\iff& \sum_\sigma T(\sigma) {\cal E} \cdot [\sigma] = 0 \cr
        &\iff& \sum_\sigma T(\sigma) x_{\delta \sigma} = 0 \cr
        &\iff& \sum_\sigma T(\sigma) (x_{\delta \sigma})_e = 0 \qquad \forall e \cr
        &\iff& \sum_\sigma T(\sigma) g_e(\sigma) = 0 \qquad \forall e \cr
        &\iff& (g_e, T) = 0 \qquad \forall e \cr
    \end{eqnarray*}
    The result follows from Proposition \ref{inner}.
\end{proof}
In summary, we have translated the problem of characterizing entropy sums with cancelling divergences to the problem of characterizing maps from subsets of vertices to $\mathbb{R}$ that lie in the orthogonal complement of the vectors $\{g_e\}$ kernel of ${\cal E}$. 

\section{Cancellation of codimension one divergences}

In this section, we establish two results characterizing the space ${\cal T}_G = \ker {\cal E}$ by describing two different ways to construct a basis. 

\subsection{A Fourier basis for ${\cal T}_G$}

In this section, we recall that there is a version of the Fourier transform that applies to functions on the ``Boolean cube'' (maps from subsets of $n$ items to $\mathbb{R}$) and show that an orthogonal basis for ${\cal T}_G$ can be constructed very simply using the Fourier basis elements (see, e.g., \cite{ODonnell2014}).

\begin{definition}
The {\it Fourier basis} is as set of functions in ${\cal T}_\mathbb{R}$ labeled by subsets of $V$. For a subset $\sigma$, we define 
\begin{equation}
\chi_\sigma = \sum_\tau (-1)^{x_\sigma \cdot x_\tau} [\tau] \; ,
\end{equation}
where $x_\sigma \cdot x_\tau$ is the usual dot product of the characteristic vectors.
\end{definition}

\begin{proposition}(The Fourier transform)

For any $T$, we have 
\begin{equation}
\label{Fourier}
T = \sum_\sigma \hat{T}(\sigma) \chi_\sigma
\end{equation}
where
\begin{equation}
\label{defFour}
\hat{T}(\sigma) = {1 \over 2^n} \sum_\tau T(\tau) (-1)^{x_\sigma \cdot x_\tau}\; .
\end{equation}
\end{proposition}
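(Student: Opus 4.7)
The plan is to verify directly that the Fourier basis $\{\chi_\sigma\}$ is an orthogonal basis of ${\cal T}_\mathbb{R}$ with $(\chi_\sigma,\chi_\sigma) = 2^n$, and then read off the expansion coefficients. Since $\dim {\cal T}_\mathbb{R} = 2^n$ and there are $2^n$ of the $\chi_\sigma$, orthogonality alone is enough to conclude they span.

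The key computation is
\[
(\chi_\sigma, \chi_\tau) = \sum_\rho (-1)^{x_\sigma \cdot x_\rho}(-1)^{x_\tau \cdot x_\rho} = \sum_\rho (-1)^{(x_\sigma + x_\tau) \cdot x_\rho},
\]
where the sum is over all subsets $\rho$ of $V$. I would factor this sum over the $n$ coordinates: writing $x_\rho = (x_\rho^{(1)}, \ldots, x_\rho^{(n)})$ and letting each component range over $\{0,1\}$ independently as $\rho$ ranges over all subsets, the sum factors as
\[
\prod_{i=1}^n \sum_{a \in \{0,1\}} (-1)^{(x_\sigma^{(i)} + x_\tau^{(i)}) \cdot a}.
\]
For each $i$, the inner sum equals $2$ if $x_\sigma^{(i)} = x_\tau^{(i)}$ and $0$ otherwise (since $1 + (-1) = 0$). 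Hence the product is $2^n$ when $\sigma = \tau$ and zero when $\sigma \ne \tau$, establishing orthogonality and the normalization.

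With this in hand, the expansion (\ref{Fourier}) follows immediately: any $T \in {\cal T}_\mathbb{R}$ has a unique expansion $T = \sum_\sigma c_\sigma \chi_\sigma$, and taking the inner product with $\chi_\sigma$ gives $c_\sigma = (T,\chi_\sigma)/2^n$. Unwinding the definitions,
\[
(T,\chi_\sigma) = \sum_\tau T(\tau) \, (-1)^{x_\sigma \cdot x_\tau},
\]
so $c_\sigma$ matches the formula (\ref{defFour}) for $\hat{T}(\sigma)$.

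There is no real obstacle here — the whole argument rests on the factorization of the character sum over independent coordinates, which is the standard proof of Fourier orthogonality on the Boolean cube. The only small point to flag is that the dot product $x_\sigma \cdot x_\tau$ appearing in the exponent can be treated either as an integer or modulo $2$, since only its parity matters; either formulation makes the factorization step transparent.
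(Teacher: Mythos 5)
Your proof is correct and follows essentially the same route as the paper: both establish $(\chi_\sigma,\chi_\tau) = 2^n\delta_{\sigma,\tau}$ by observing that the character sum $\sum_\rho (-1)^{(x_\sigma+x_\tau)\cdot x_\rho}$ vanishes unless $\sigma=\tau$ (your coordinate-by-coordinate factorization is just a slightly more explicit phrasing of the paper's component argument), and then read off the coefficients by taking inner products. No gaps.
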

\begin{proof}
    Plugging in the definitions for the Fourier basis elements and the inner product, we find 
\[
(\chi_\sigma, \chi_\tau) = \sum_\alpha (-1)^{(x_\sigma + x_\tau) \cdot x_\alpha} =  2^n \delta_{\sigma, \tau}
\]
where the last equivalence follows because if any component of $x_\sigma$ and $x_\tau$ differ, then $x_\sigma + x_\tau$ is 1 for that component and the sum has a factor $\sum_{(x_\alpha)_i = \pm 1} (-1)^{(x_\alpha)_i} = 0$. If $\sigma = \tau$, all terms in the sum are 1, so we get $2^n$. Thus, the functions $\chi$ provide an orthogonal basis. We can thus expand any function $T$ as in (\ref{Fourier}). Taking the inner product with $\chi_\sigma$ on both sides gives
\[
2^n \hat{T}(\sigma) = (\chi_\sigma, T) =  \sum_\tau T(\tau) (-1)^{x_\sigma \cdot x_\tau}
\]
as desired.
\end{proof}

With this background, we can state the central result for this section:
\begin{theorem}\label{thm:Fourier}
An orthogonal basis for ${\cal T}_G = ker({\cal E})$ is provided by the Fourier basis elements $\chi_\sigma$ for all $\sigma$ not corresponding to the empty set or to an edge in $G$, together with the function
\[
\chi_0 \equiv \chi_\emptyset + \sum_{e \in E} \chi_e \; . 
\]
The orthogonal complement of ${\cal T}_G$ is spanned by
\[
\{\hat{\chi}_e \equiv {1\over 2} (\chi_e - \chi_\emptyset) | e \in E \} \; . 
\]
\end{theorem}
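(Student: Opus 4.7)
The plan is to explicitly diagonalize the map ${\cal E}$ in the Fourier basis $\{\chi_\sigma\}$, read off the kernel, and then deduce the orthogonal complement by a Fourier-support argument. The central computation is, for each edge $e=\{i,j\}\in E$, to evaluate
\[
({\cal E}\cdot\chi_\sigma)_e \;=\; \sum_\tau (-1)^{x_\sigma\cdot x_\tau}\,(x_{\delta\tau})_e \;=\; \sum_{\tau\,:\,e\in\delta\tau} (-1)^{x_\sigma\cdot x_\tau}.
\]
I would split each $\tau$ with $e\in\delta\tau$ as either $\tau'\cup\{i\}$ or $\tau'\cup\{j\}$ with $\tau'\subseteq V\setminus\{i,j\}$, and pull the $\tau'$-independent signs outside to obtain a prefactor $(-1)^{a_i}+(-1)^{a_j}$ (writing $a_k=1$ if $k\in\sigma$, else $0$) times the character sum $\sum_{\tau'}(-1)^{x_{\sigma\setminus\{i,j\}}\cdot x_{\tau'}}$ over $V\setminus\{i,j\}$. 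The inner character sum vanishes unless $\sigma\setminus\{i,j\}=\emptyset$, so ${\cal E}\cdot\chi_\sigma=0$ whenever $\sigma$ is not contained in some edge's endpoint pair. Checking the four cases $\sigma\subseteq\{i,j\}$ individually then shows that ${\cal E}\cdot\chi_\sigma=0$ unless $\sigma=\emptyset$ or $\sigma$ is an edge of $G$, with ${\cal E}\cdot\chi_\emptyset = 2^{n-1}\mathbf{1}$ and ${\cal E}\cdot\chi_e = -2^{n-1}\,x_{\{e\}}$ for $e\in E$.

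With this computation in hand, an arbitrary $T=\sum_\sigma \hat T(\sigma)\chi_\sigma$ lies in $\ker{\cal E}$ if and only if, for every $e'\in E$, $\hat T(\emptyset)-\hat T(e')=0$; that is, the Fourier coefficients on $\emptyset$ and on every edge must coincide. Equivalently, these contributions must enter $T$ only through the single combination $\chi_0=\chi_\emptyset+\sum_{e\in E}\chi_e$. Hence ${\cal T}_G$ is spanned by $\chi_0$ together with $\{\chi_\sigma : \sigma\neq\emptyset,\,\sigma\notin E\}$, and the orthogonality of this basis follows at once from $(\chi_\sigma,\chi_\tau)=2^n\delta_{\sigma,\tau}$: the elements in the second family are mutually orthogonal, and each is orthogonal to $\chi_0$ because the Fourier support of $\chi_0$ lies entirely in $\{\emptyset\}\cup E$, disjoint from the label of every element of the second family.

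For the orthogonal complement, since every $\chi_\sigma$ with $\sigma\neq\emptyset$ and $\sigma\notin E$ already belongs to ${\cal T}_G$, any element of ${\cal T}_G^\perp$ must lie in the $(|E|+1)$-dimensional Fourier subspace spanned by $\{\chi_\emptyset\}\cup\{\chi_e:e\in E\}$. Within this subspace, orthogonality to $\chi_0$ imposes the single linear condition that the Fourier coefficients on $\emptyset$ and on the edges sum to zero, leaving an $|E|$-dimensional subspace. The vectors $\hat\chi_e=\tfrac12(\chi_e-\chi_\emptyset)$ satisfy this condition manifestly and are linearly independent (their Fourier expansions carry coefficient $\tfrac12$ on the distinct basis vector $\chi_e$), so they form a basis of ${\cal T}_G^\perp$. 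I anticipate no serious obstacle; the only nontrivial step is the character-sum evaluation of ${\cal E}\cdot\chi_\sigma$ in the first paragraph, and this reduces to the standard orthogonality of characters on the Boolean cube once the sum is split according to the occupation of the two endpoints of $e$.
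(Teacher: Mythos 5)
Your proposal is correct and follows essentially the same route as the paper: both arguments reduce to the identity $({\cal E}T)_{e}=2^{n-1}\bigl(\hat T(\emptyset)-\hat T(e)\bigr)$, which the paper obtains by writing $(\delta\sigma)_{e}=\tfrac12\bigl(1-(-1)^{(x_\sigma)_i+(x_\sigma)_j}\bigr)$ and recognizing Fourier coefficients, while you obtain it equivalently by diagonalizing ${\cal E}$ on the Fourier basis via character orthogonality. The only (harmless) cosmetic difference is in the last step, where you establish that the $\hat\chi_e$ span $({\cal T}_G)^\perp$ by linear independence of their Fourier supports plus a dimension count, whereas the paper checks non-singularity of their Gram matrix.
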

The proof follows almost immediately from the following:
\begin{lemma}\label{lem:Eaction}
    For any $T \in {\cal T}_\mathbb{R}$ and $e \in E$, we have that
\[
({\cal E} T)_e =  2^{n-1}(\hat{T}(\emptyset) - \hat{T}(e)) = (\hat{\chi}_e, T) 
\]
\end{lemma}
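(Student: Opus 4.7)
The plan is to reduce all three expressions in the lemma to the single quantity $(g_e, T) = \sum_{\sigma : e \in \delta\sigma} T(\sigma)$, which is already the content of Proposition \ref{inner} and which I showed above coincides with $(\mathcal{E}T)_e$. Thus the lemma is essentially a three-way reconciliation, and the only substantive tool needed is a case analysis of the dot product $x_e \cdot x_\tau$ that enters the Fourier formula.

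First, for the left equality, I would use the definitions of $\hat{T}$ to write
\[
2^{n-1}\bigl(\hat{T}(\emptyset) - \hat{T}(e)\bigr) = \tfrac{1}{2}\sum_{\tau} T(\tau)\bigl[\,1 - (-1)^{x_e \cdot x_\tau}\,\bigr].
\]
The key observation is that since $e$ is an edge, $x_e$ has exactly two $1$'s, so $x_e \cdot x_\tau \in \{0,1,2\}$. The bracket therefore equals $2$ when exactly one endpoint of $e$ lies in $\tau$, i.e.\ when $e \in \delta \tau$, and vanishes otherwise. This is precisely $2 g_e(\tau)$, so the whole expression collapses to $(g_e,T) = (\mathcal{E}T)_e$ by the computation in Proposition \ref{inner}.

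Second, for the equality with $(\hat\chi_e, T)$, I would expand $\hat\chi_e = \tfrac{1}{2}(\chi_e - \chi_\emptyset)$ in the indicator basis. The coefficient of $[\tau]$ is $\tfrac{1}{2}\bigl[(-1)^{x_e \cdot x_\tau} - 1\bigr]$, which by the same edge case analysis vanishes when $x_e \cdot x_\tau$ is even and equals $-1$ when $x_e \cdot x_\tau = 1$. Hence $\hat\chi_e$ is supported exactly on those $\tau$ with $e \in \delta\tau$, and pairing with $T$ gives $\sum_\tau T(\tau) g_e(\tau)$ up to a global sign, again matching $2^{n-1}(\hat{T}(\emptyset)-\hat{T}(e))$; I would double-check the sign convention of the definition of $\hat\chi_e$ here to make sure the identification is $+1$ rather than $-1$.

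There is no real obstacle: the whole proof is a bookkeeping exercise in which the Fourier basis element $\chi_e$ differs from $\chi_\emptyset$ precisely at subsets that cut the edge $e$, a structural fact that makes $\mathcal{E}$ diagonal in the Fourier basis on the edge-labeled coordinates. The only conceptual point worth highlighting is the cardinality restriction $|e|=2$ that collapses $x_e \cdot x_\tau$ to three possible values; this is exactly what fails for the higher-codimension intersections treated in section 4, and it is the reason the lemma (and hence Theorem \ref{thm:Fourier}) applies cleanly only at the level of edges.
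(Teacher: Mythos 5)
Your proof is correct and follows essentially the same route as the paper: the case analysis showing $\tfrac{1}{2}\bigl(1-(-1)^{x_e\cdot x_\tau}\bigr)=g_e(\tau)$ for $|e|=2$ is exactly the paper's identity for $(\delta\sigma)_e$, and the reduction of $2^{n-1}(\hat T(\emptyset)-\hat T(e))$ to $(g_e,T)=({\cal E}T)_e$ matches the paper's computation read in reverse. Your instinct to double-check the sign in the final equality is well founded: with the definition $\hat\chi_e=\tfrac{1}{2}(\chi_e-\chi_\emptyset)$ one finds $\hat\chi_e(\tau)=-g_e(\tau)$ and hence $(\hat\chi_e,T)=2^{n-1}\bigl(\hat T(e)-\hat T(\emptyset)\bigr)=-({\cal E}T)_e$, so the last equality in the lemma holds only up to an overall sign --- a slip in the statement that is immaterial for Theorem \ref{thm:Fourier}, where only the vanishing of these quantities and the span of the $\hat\chi_e$ are used.
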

\begin{proof}
For any subset $\sigma$ of vertices and for any edge $e_{ij} \in E$ between vertices $i$ and $j$, we have
\[
(\delta \sigma)_{e_{ij}} = {1 \over 2} \left( 1- (-1)^{(x_{\sigma})_i + (x_\sigma)_j} \right) \; .
\]
This follows immediately from the definition of the cut map. From this, for any function $T \in {\cal T}_\mathbb{R}$, we have 
\[
({\cal E} T)_{e_{ij}} = \sum_{\sigma} T(\sigma) ({\cal E} [\sigma])_{e_{ij}} = \sum_{\sigma} T(\sigma) (\delta \sigma)_{e_{ij}} = {1 \over 2} \sum_{\sigma} T(\sigma) \left( 1- (-1)^{(x_{\sigma})_i + (x_\sigma)_j} \right)
\]
Recalling the result (\ref{defFour}) for the coefficients $\hat{T}$ of $T$ in the Fourier basis, we recognize the first term as $2^{n-1} \hat{T}(\emptyset)$ and the second as $-2^{n-1} \hat{T}(e_{ij})$ so the result follows.
\end{proof}
Using this lemma, we now prove Theorem \ref{thm:Fourier}.
\begin{proof} (Theorem \ref{thm:Fourier})
From Lemma \ref{lem:Eaction}, we see that ${\cal E} T$ vanishes if and only if for every edge $e \in E$, the Fourier coefficient associated with $\chi_e$ is the same as the Fourier coefficient associated with $\chi_\emptyset$. In the space ${\cal T}_{E \emptyset}$ spanned by $\chi_\emptyset$ and $\{\chi_e | e \in E\}$, only the vector $\chi_0$ defined in the statement of the theorem satisfies this condition. The orthogonal complement of ${\cal T}_{E \emptyset}$ is spanned by $\chi_\sigma$ for which $\sigma$ is neither the empty set nor an edge in $G$. These all  satisfy our condition, are orthogonal to one another, and are orthogonal to $\chi_0$, so these $\chi$ together with $\chi_0$ span ${\cal T}_G$. The orthogonal complement  $({\cal T}_G)^\perp$ is the subspace of ${\cal T}_{E \emptyset}$ orthogonal to $\chi_0$. We can check that the functions $\hat{\chi}_e = (\chi_e - \chi_\emptyset)/2$ for edges $e$ are each orthogonal to $\chi_0$; the matrix of their inner products has $2^{n-1}$ for each diagonal element and $2^{n-2}$ for all off diagonal elements. This is non-singular, so $\{\hat{\chi}_e\}$ span $({\cal T}_G)^\perp$.
\end{proof}

\begin{example}
As an example, consider the graph with three vertices {1,2,3} connected as $1-2-3$. Then the basis elements and corresponding combination of entropies are 
\begin{equation}
\begin{array}{ll}
    \chi_1 & \qquad S_\emptyset - S_1 + S_2 + S_3 - S_{12} - S_{13} + S_{23} - S_{123}\cr
    \chi_2 & \qquad S_\emptyset + S_1 - S_2 + S_3 - S_{12} + S_{13} - S_{23} - S_{123} \cr
    \chi_3 & \qquad S_\emptyset + S_1 + S_2 - S_3 + S_{12} - S_{13} - S_{23} - S_{123}\cr
    \chi_{13} & \qquad S_\emptyset - S_1 + S_2 - S_3 - S_{12} + S_{13} - S_{23} + S_{123}\cr
    \chi_{123} & \qquad S_\emptyset - S_1 - S_2 - S_3 + S_{12} + S_{13} + S_{23} - S_{123}\cr
   \chi_{\emptyset} + \chi_{12} + \chi_{23}& \qquad 3 S_\emptyset + S_1 - S_2 + S_3 + S_{12} - S_{13} + S_{23}+ 3 S_{123}\cr
\end{array}
\end{equation}
Each of these has a nonzero coefficient for every subset of regions. More generally, each of the entropy sums for the case with $n$ regions will have $2^n$ nonvanishing terms, so the individual basis elements are somewhat unwieldy and the corresponding entropy sums do not have a simple information theoretic interpretation. In the next section, we will see that it is always possible to form a basis using only sums of the four types described in the introduction. For any number of regions, these have either 1,2,3, or 7 nonvanishing terms and have standard information theory interpretations.
\end{example}

\subsection{``Information'' bases for ${\cal T}_G$.}

\begin{figure}
    \centering
\includegraphics[width=0.95\textwidth]{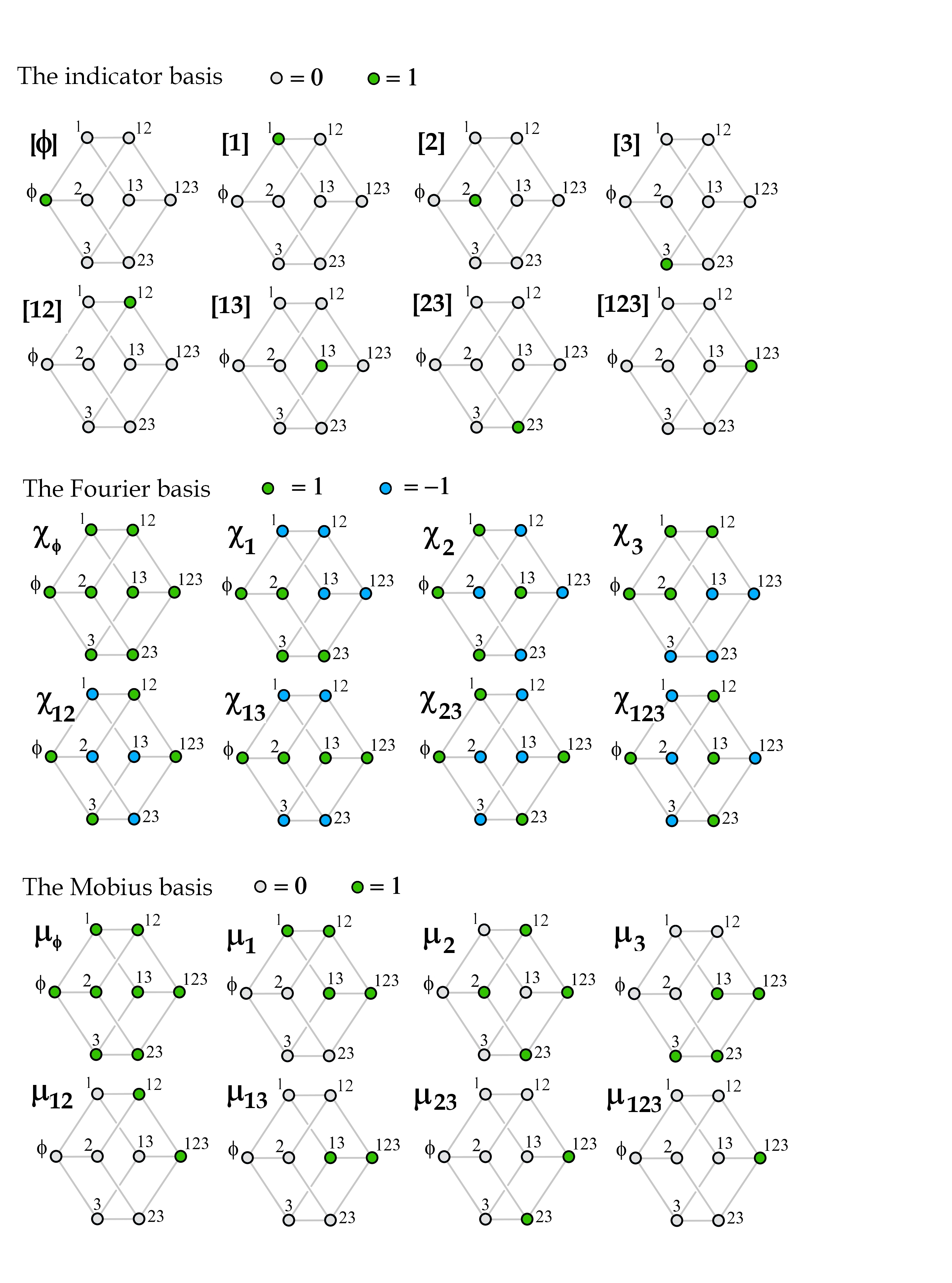}
    \caption{Indicator, Fourier, and Möbius bases for functions on subsets of a three-element set $\{1,2,3\}$ (the ``Boolean cube'').}
    \label{fig:Boolean1}
\end{figure} 

We now move on to the statement and proof of a theorem that provides an alternative spanning set of functions for ${\cal T}_G$ whose corresponding entropy sums have a simpler information-theoretic interpretation. We begin by defining four classes of functions
\begin{definition}\label{def:T14}
\begin{enumerate}
    \item Let $T_0$ be the function that assigns 1 to the empty set and 0 to all non-empty subsets.
    \item Let $T_\sigma = [\sigma] - [\sigma^c]$ for any subset $\sigma$.
    \item Let $T_{\sigma,\tau} = [\sigma] + [\tau] - [\sigma \cup \tau]$ for any disjoint $\sigma$ and $\tau$ whose induced subgraphs are not connected by any edge in $G$. 
    \item Let $T_{\sigma,\tau,\upsilon} = [\sigma] + [\tau]  + [\upsilon] - [\sigma \cup \tau ] - [\sigma\cup \upsilon] -[ \tau \cup \upsilon] + [\sigma \cup \tau \cup \upsilon]$ for any disjoint $\sigma$, $\tau$, and $\upsilon$. 
\end{enumerate}
\end{definition}
Our main result for this section is that these four types of functions span all of ${\cal T}_G$:
\begin{theorem}
\label{thm:Möbius}
    Define ${\cal T}_G^*$ to be span of the four types of functions from Definition \ref{def:T14}. Then ${\cal T}_G^* = {\cal T}_G$. 
\end{theorem}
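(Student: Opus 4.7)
The plan is to prove the theorem via the strategy outlined in step 5 of the introduction: first establish the easy inclusion ${\cal T}_G^* \subseteq {\cal T}_G$, and then show $({\cal T}_G^*)^\perp \subseteq ({\cal T}_G)^\perp$, which together force equality. For the forward inclusion I would verify that each of the four generating families satisfies $(g_e, T) = 0$ for every edge $e \in E$. For $T_0 = [\emptyset]$ this is immediate since the empty set has empty cut, and for $T_\sigma = [\sigma] - [\sigma^c]$ it follows from $\delta \sigma = \delta \sigma^c$. For $T_{\sigma,\tau}$ one checks case-by-case on the location of the endpoints of $e = \{i,j\}$: the non-adjacency hypothesis forbids the single dangerous configuration ($i \in \sigma$, $j \in \tau$), and the remaining configurations give contributions that cancel. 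For $T_{\sigma,\tau,\upsilon}$ the same style of case analysis using only disjointness yields zero in every case.

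For the reverse inclusion, the key tool is the Möbius basis $\{m_\alpha\}_{\alpha \subseteq V}$ defined by $m_\alpha(\tau) = [\alpha \subseteq \tau]$, so that $T(\tau) = \sum_{\alpha \subseteq \tau} \hat{T}^M(\alpha)$ for uniquely determined coefficients $\hat{T}^M(\alpha)$. Expanding each orthogonality condition by inclusion–exclusion converts it into a linear constraint on these coefficients. The condition $(T, T_0) = 0$ reads $\hat{T}^M(\emptyset) = 0$. The tripartite condition $(T, T_{\sigma,\tau,\upsilon}) = 0$ reduces to $\sum_{\alpha \text{ meets each of } \sigma, \tau, \upsilon} \hat{T}^M(\alpha) = 0$, since the sum of inclusion indicators $[\alpha \subseteq \sigma] + [\alpha \subseteq \tau] + \cdots$ vanishes unless $\alpha$ meets all three sets (or is empty). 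The mutual-information condition $(T, T_{\sigma,\tau}) = 0$ similarly becomes $\sum_{\alpha \text{ meets both } \sigma, \tau} \hat{T}^M(\alpha) = 0$, and the complement condition $(T, T_\sigma) = 0$ is just $T(\sigma) = T(\sigma^c)$.

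Executing the reduction: applying the tripartite condition to singleton triples $(\{i\},\{j\},\{k\})$ forces $\hat{T}^M(\alpha) = 0$ for $|\alpha| = 3$. An induction on $|\alpha|$ using the triple $(\{i\}, \{j\}, \alpha \setminus \{i,j\})$ then kills $\hat{T}^M$ on every $\alpha$ with $|\alpha| \geq 3$, since the only unknown term in that tripartite sum is $\hat{T}^M(\alpha)$ itself. With only singleton and pair coefficients surviving, the mutual-information condition on $(\{i\},\{j\})$ for every non-edge forces $\hat{T}^M(\{i,j\}) = 0$ off $E$, and the larger mutual-information conditions are then automatic. The singleton complement condition $T(\{k\}) = T(\{k\}^c)$ pins down the vertex coefficient as $\hat{T}^M(\{k\}) = -\tfrac{1}{2} \sum_{e \in E,\, k \in e} \hat{T}^M(e)$, and a direct inner-product computation shows the remaining complement conditions are then automatic. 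Thus every $T \in ({\cal T}_G^*)^\perp$ has the form $T = \sum_{e \in E} c_e \bigl(m_e - \tfrac{1}{2}(m_{\{i\}} + m_{\{j\}})\bigr)$ for some $c_e \in \mathbb{R}$, where $e = \{i,j\}$. A case split on $|e \cap \tau| \in \{0,1,2\}$ gives the identity $m_e - \tfrac{1}{2}(m_{\{i\}} + m_{\{j\}}) = \tfrac{1}{2}\hat{\chi}_e$, so $T \in \operatorname{span}\{\hat{\chi}_e : e \in E\} = ({\cal T}_G)^\perp$ by Theorem \ref{thm:Fourier}, completing the proof.

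The main obstacle is the bookkeeping in the tripartite induction together with the verification that the mutual-information and complement conditions for larger subsets follow automatically from the singleton cases; both are conceptually clean but require careful indicator algebra. A secondary delicate point is that the argument uses the four orthogonality families in a specific order (tripartite, then mutual, then complement), each step using that the Möbius coefficients killed by the previous step genuinely vanish. The closing Möbius-to-Fourier identification $m_e - \tfrac{1}{2}(m_{\{i\}} + m_{\{j\}}) = \tfrac{1}{2}\hat{\chi}_e$ is a three-line computation, but it is the linchpin that lets Theorem \ref{thm:Fourier} close the orthogonal-complement argument.
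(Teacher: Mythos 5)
Your proposal is correct and follows essentially the same route as the paper's proof: the same two inclusions, the same cut-based case analysis for ${\cal T}_G^* \subseteq \ker{\cal E}$, the same M\"obius-coefficient reduction (tripartite induction killing coefficients of sets with $|\alpha|\ge 3$, non-edge mutual informations killing off-edge pairs, complement conditions fixing the vertex coefficients), and the same closing identification $m_e - \tfrac12(m_{\{i\}}+m_{\{j\}}) = \tfrac12\hat\chi_e$ feeding into Theorem \ref{thm:Fourier}. The one spot where your writeup is looser than the paper's is the vertex-coefficient step: the single equation $T(\{k\})=T(\{k\}^c)$ actually reads $2\hat T^M(\{k\}) + \sum_{e\ni k}\hat T^M(e) = \sum_i \hat T^M(\{i\}) + \sum_{e\in E}\hat T^M(e)$, and to conclude that the right-hand side vanishes you must either also invoke the complement condition for $\sigma=\emptyset$ (orthogonality to $[\emptyset]-[V]$) or sum the singleton conditions over all $k$ (which only works for $n\neq 2$), whereas the paper sidesteps this by imposing $T(\sigma)=T(\sigma^c)$ for all $\sigma$ at once and matching multilinear coefficients.
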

The theorem follows from a pair of lemmas:
\begin{lemma}
\label{lem:inkernel}
    The subspace ${\cal T}_G^*$ is in $ker({\cal E})$. 
\end{lemma}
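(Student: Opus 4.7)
By linearity of ${\cal E}$ and the definition of ${\cal T}_G^*$ as a span, it suffices to show that each of the four generating families of Definition \ref{def:T14} lies in $\ker({\cal E})$. By Proposition \ref{inner}, membership in $\ker({\cal E})$ is equivalent to the condition $(g_e, T) = 0$ for every edge $e \in E$, where $g_e(\rho) = 1$ precisely when the edge $e = \{i,j\}$ has exactly one endpoint in $\rho$ (equivalently, when $(x_\rho)_i + (x_\rho)_j$ is odd).

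Two universal features of $g_e$ dispatch the first two families at once. First, $g_e(\emptyset) = 0$, so $(g_e, T_0) = 0$. Second, since ``exactly one endpoint in $\rho$'' is the same question as ``exactly one endpoint in $\rho^c$'', we have $g_e(\rho) = g_e(\rho^c)$ for every $\rho$. Therefore $(g_e, T_\sigma) = g_e(\sigma) - g_e(\sigma^c) = 0$ for every $\sigma$.

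The mutual-information and tripartite-information generators both succumb to a short case analysis on the locations of the endpoints $i, j$ of $e$ among the disjoint parts. For $T_{\sigma,\tau}$, I would expand
\[
(g_e, T_{\sigma,\tau}) = g_e(\sigma) + g_e(\tau) - g_e(\sigma \cup \tau)
\]
and note that the disjointness of $\sigma$ and $\tau$ forces each endpoint to lie in at most one of them, while the non-adjacency hypothesis of Definition \ref{def:T14}(3) excludes the only potentially bad case $\{i \in \sigma,\; j \in \tau\}$ together with its mirror. The remaining cases (both endpoints outside $\sigma \cup \tau$; both inside the same part; one inside one part and the other outside both) each yield zero on direct inspection. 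For $T_{\sigma,\tau,\upsilon}$, the analogous seven-term expansion of $(g_e, T_{\sigma,\tau,\upsilon})$ again admits a case split by the locations of $i$ and $j$ among $\sigma, \tau, \upsilon$ and ``outside''. The most interesting sub-case is when the endpoints lie in two different parts, say $i \in \sigma$ and $j \in \tau$; then $g_e$ equals $1$ on $\sigma, \tau, \sigma \cup \upsilon, \tau \cup \upsilon$ and $0$ on the remaining three arguments, producing the signed sum $1 + 1 + 0 - 0 - 1 - 1 + 0 = 0$. The other sub-cases (both endpoints outside $\sigma \cup \tau \cup \upsilon$; both inside a common part; one endpoint inside a single part with the other outside all three) reduce to either an all-zero line or a three-term cancellation of the same shape as in the mutual-information case. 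Crucially, no adjacency hypothesis is invoked here, because the inclusion-exclusion pattern already absorbs the potential obstruction.

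No step is conceptually deep; the main point will be to organize the case analyses cleanly. What the proof makes manifest, and what is worth flagging in the exposition, is the asymmetry that $T_{\sigma,\tau}$ needs the non-adjacency restriction while $T_{\sigma,\tau,\upsilon}$ does not. This is precisely the combinatorial reason why types 3 and 4 appear as distinct generating families in the information basis for ${\cal T}_G$, rather than being subsumed into one.
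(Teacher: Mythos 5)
Your proposal is correct and follows essentially the same route as the paper: reduce to the four generating families, handle $T_0$ and $T_\sigma$ by the symmetry $g_e(\rho)=g_e(\rho^c)$, and dispatch $T_{\sigma,\tau}$ and $T_{\sigma,\tau,\upsilon}$ by a case analysis on where the endpoints of $e$ fall among the parts, with the non-adjacency hypothesis needed only for the two-part case. The explicit sub-case you verify ($i\in\sigma$, $j\in\tau$ giving $1+1+0-0-1-1+0=0$) and the remaining cases all check out, matching the paper's cut-based argument term for term.
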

\begin{lemma}
\label{complemma}
The orthogonal complement of ${\cal T}_G^*$ is in the orthogonal complement of $ker({\cal E})$.
\end{lemma}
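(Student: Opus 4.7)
The plan is to work in the Möbius basis $\{M_\sigma\}$ for ${\cal T}_\mathbb{R}$, where $M_\sigma(\tau) = 1$ if $\sigma \subseteq \tau$ and $0$ otherwise, so that any $T$ has a unique expansion $T = \sum_\sigma m_T(\sigma) M_\sigma$ via Möbius inversion on the subset lattice. The orthogonality conditions defining $({\cal T}_G^*)^\perp$ translate into very simple statements about the coefficients $m_T(\sigma)$; from these I will deduce that $m_T$ vanishes on the empty set and on every subset that is not a singleton or an edge of $G$, leaving only linear relations among the surviving singleton and edge coefficients. The final step is a mechanical conversion back to the Fourier basis, which by Theorem \ref{thm:Fourier} places $T$ in $({\cal T}_G)^\perp = \mathrm{span}\{\hat{\chi}_e\}$.

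The proof will proceed as follows. First, expanding $T(\sigma) = \sum_{\alpha \subseteq \sigma} m_T(\alpha)$ and collecting terms by inclusion-exclusion gives $(T, T_0) = m_T(\emptyset)$, $(T, T_{\sigma,\tau}) = m_T(\emptyset) - \sum_{R \subseteq \sigma \cup \tau,\, R \cap \sigma \neq \emptyset,\, R \cap \tau \neq \emptyset} m_T(R)$, and $(T, T_{\sigma,\tau,\upsilon}) = m_T(\emptyset) + \sum_{R \subseteq \sigma \cup \tau \cup \upsilon,\, R \text{ meets all three pieces}} m_T(R)$. Orthogonality to $T_0$ then yields $m_T(\emptyset) = 0$. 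Next, the tripartite condition applied with $\sigma = \{i\}$, $\tau = \{j\}$, $\upsilon = S \setminus \{i,j\}$ shows, by induction on $|S| \geq 3$, that $m_T(S) = 0$ for every $S$ of size at least three (the base case $|S|=3$ follows directly with three singletons, and the inductive step uses that every contributing $R \neq S$ has $3 \leq |R| < |S|$). Then the mutual-information condition applied to singletons $(\{i\}, \{j\})$ for non-adjacent $i,j$ gives $m_T(\{i,j\}) = 0$ for every non-edge. At this point $T = \sum_{i \in V} c_i M_{\{i\}} + \sum_{e \in E} d_e M_e$ with $c_i = m_T(\{i\})$ and $d_e = m_T(e)$. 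The complementation condition $T \perp T_V$ forces $T(V) = 0$, i.e.\ $\sum_i c_i + \sum_e d_e = 0$, and the singleton complementations $T(\{j\}) = T(V \setminus \{j\})$ combined with the previous relation yield $2c_j + \sum_{e \ni j} d_e = 0$ for each vertex $j$. Finally, substituting the Fourier expansions $M_{\{i\}} = \tfrac{1}{2}(\chi_\emptyset - \chi_{\{i\}})$ and $M_{\{i,j\}} = \tfrac{1}{4}(\chi_\emptyset - \chi_{\{i\}} - \chi_{\{j\}} + \chi_{\{i,j\}})$ collapses the Fourier expansion of $T$ to $\sum_e \tfrac{d_e}{2}\hat{\chi}_e$, since the $\chi_{\{i\}}$ coefficients vanish by the singleton relation and the $\chi_\emptyset$ and $\chi_e$ coefficients line up correctly using $\sum_i c_i + \sum_e d_e = 0$.

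The main obstacle I anticipate is the inductive use of the tripartite condition in the Möbius basis. The choice of triple $(\{i\}, \{j\}, S \setminus \{i,j\})$ is dictated by the requirement that every $R$ meeting all three pieces contains $\{i,j\}$ and touches $S \setminus \{i,j\}$, guaranteeing $|R| \geq 3$ so that induction can be closed without needing to know $m_T$ on pairs (which depend on the graph structure). The remaining ingredients are essentially bookkeeping: the Möbius-to-Fourier conversion is a short direct calculation, and it is worth noting that only the complementation conditions at $\sigma = V$ and single-vertex $\sigma$ are needed in the end — the constraints from the other $T_\sigma$ become redundant once $m_T$ has been restricted to the singleton-and-edge support.
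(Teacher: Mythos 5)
Your proposal is correct and follows essentially the same route as the paper's proof: expand $T$ in the Möbius basis, use orthogonality to $T_0$ to kill the empty-set coefficient, use the tripartite functions with the triple $(\{i\},\{j\},S\setminus\{i,j\})$ and induction on $|S|$ to kill all coefficients of size $\ge 3$, use non-adjacent singleton pairs to kill non-edge coefficients, extract the relation $2c_j+\sum_{e\ni j}d_e=0$ from the complementation conditions, and convert to the Fourier basis to land in $\mathrm{span}\{\hat{\chi}_e\}=({\cal T}_G)^\perp$. The only cosmetic difference is that you invoke just $T_V$ and the singleton complementations rather than general $T_\sigma$ (and your $T_V$ relation is in fact already implied by summing the singleton relations), which matches the paper's own remark that these suffice.
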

We begin with the proof of Lemma \ref{lem:inkernel}
\begin{proof} (Lemma \ref{lem:inkernel})
We show that functions of each type are in $ker({\cal E})$:
\vspace{-0.3cm}
    \begin{enumerate}
    \item There are no edges in the cut of the empty set, so ${\cal E} T_0 = 0$.
    \item The function $T_\sigma$ is in $\ker {\cal E}$ since $e$ is in the cut for $\sigma$ $e$ if and only if $e$ is in the cut for $\sigma^c$.
    \item For subsets $\sigma$ and $\tau$ not connected by an edge, an edge will be in the cut of $\sigma \cup \tau$ if and only if it is in the cut of $\sigma$ or $\tau$. Further, no edge can be in the cut of both $\sigma$ and $\tau$, so ${\cal E} T_{\sigma, \tau} = 0$.
    \item  To show that $T_{\sigma, \tau, \upsilon}$ is in $\ker {\cal E}$, consider any edge in the cut of exactly one the three elementary subsets, say $\sigma$. Then it will also be in the cuts for $\sigma \cup \tau$, $\sigma \cup \upsilon$, and $\sigma \cup \tau \cup \upsilon$ but none of the other cuts for the subsets appearing in $T_{\sigma, \tau, \upsilon}$. Thus $({\cal E} T_{\sigma, \tau, \upsilon})_e$ will vanish. If $e$ is in the cuts for two of the elementary subsets, say $\sigma$ and $\tau$, it will also be in the cuts for $\sigma \cup \upsilon$ and $\tau \cup \upsilon$ and none of the others, so again $({\cal E} T_{\sigma, \tau, \upsilon})_e$  vanishes. Finally, if $e$ is not in any of the cuts for any of the elementary subsets, it won't be in any of the other cuts either, so again $({\cal E} T_{\sigma, \tau, \upsilon})_e$ vanishes.
\end{enumerate}
\end{proof}
In order to prove Lemma \ref{complemma}, it will be convenient to define a third basis for ${\cal T}_\mathbb{R}$.
\begin{definition}
The {\it Möbius basis} is a set of functions in ${\cal T}_\mathbb{R}$ labeled by subsets of $V$ (see \cite{Rota1964,Stanley2011}). For a subset $\sigma$, we define 
\[
\mu_\sigma = \sum_{\tau \supseteq \sigma} [\tau] 
\]
\end{definition}
\begin{proposition}(The Möbius transform)
For any $T$, we have 
\begin{equation}
\label{Möbius}
T = \sum_\sigma \tilde{T}(\sigma) \mu_\sigma
\end{equation}
where
\begin{equation}
\label{defMob}
\tilde{T}(\sigma) = \sum_{\tau \subseteq \sigma}(-1)^{|\sigma| - |\tau|} T(\tau) \; .
\end{equation}
\end{proposition}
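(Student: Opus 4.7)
The plan is to prove the Möbius inversion formula in two steps: first establish that $\{\mu_\sigma\}_{\sigma \subseteq V}$ is a basis for ${\cal T}_\mathbb{R}$, and then verify the explicit expression for $\tilde T(\sigma)$ by direct substitution, reducing the identity to a standard binomial sum on the Boolean lattice.

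For the basis step, I would note that $\dim {\cal T}_\mathbb{R} = 2^n$ and there are exactly $2^n$ functions $\mu_\sigma$, so it suffices to establish linear independence. From the definition, $\mu_\sigma = [\sigma] + \sum_{\tau \supsetneq \sigma}[\tau]$. Ordering the subsets of $V$ by decreasing cardinality (with ties broken arbitrarily), the matrix expressing the $\mu_\sigma$ in the indicator basis is upper-triangular with $1$'s on the diagonal, hence invertible. Consequently every $T \in {\cal T}_\mathbb{R}$ has a unique expansion $T = \sum_\sigma \tilde T(\sigma)\mu_\sigma$.

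For the formula step, I would substitute the proposed expression (\ref{defMob}) into $\sum_\sigma \tilde T(\sigma)\mu_\sigma$ and compare against $T = \sum_\rho T(\rho)[\rho]$. Expanding and regrouping,
\[
\sum_\sigma \tilde T(\sigma)\mu_\sigma = \sum_\rho [\rho] \sum_{\sigma \subseteq \rho}\sum_{\tau \subseteq \sigma}(-1)^{|\sigma|-|\tau|} T(\tau),
\]
so, after swapping the inner two sums, the coefficient of each $[\rho]$ becomes
\[
\sum_{\tau \subseteq \rho} T(\tau) \sum_{\sigma : \tau \subseteq \sigma \subseteq \rho}(-1)^{|\sigma|-|\tau|}.
\]
Setting $k = |\rho|-|\tau|$, the inner sum evaluates to $\sum_{j=0}^{k}\binom{k}{j}(-1)^j = (1-1)^k$, which vanishes for $k>0$ and equals $1$ when $\tau = \rho$. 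Hence the coefficient of $[\rho]$ collapses to $T(\rho)$, as required.

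This is essentially the classical Möbius inversion on the Boolean lattice phrased in the paper's notation, so I do not anticipate a real obstacle. The only point demanding care is bookkeeping the orientation of inclusions: $\mu_\sigma$ is supported on supersets of $\sigma$, whereas $\tilde T(\sigma)$ aggregates $T(\tau)$ over subsets $\tau \subseteq \sigma$ with signs $(-1)^{|\sigma|-|\tau|}$. These two conventions interlock correctly precisely because the Möbius function on the Boolean lattice has this simple binomial form, which is exactly what the vanishing of $(1-1)^k$ for $k>0$ encodes.
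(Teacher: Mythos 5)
Your proof is correct and follows essentially the same route as the paper: both reduce the verification to the interval identity $\sum_{\sigma \,:\, \tau \subseteq \sigma \subseteq \rho}(-1)^{|\sigma|-|\tau|} = \delta_{\tau,\rho}$, evaluated via the binomial theorem as $(1-1)^{k}$. Your additional upper-triangularity argument establishing that $\{\mu_\sigma\}$ is a basis is a harmless (and slightly more complete) extra, since it also gives uniqueness of the coefficients, which the paper leaves implicit.
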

\begin{proof}
    The result follows from the identity
    \[
    \sum_{\sigma | \tau \subseteq \sigma \subseteq \upsilon \ } (-1)^{|\sigma| - |\tau|}  = \delta_{\tau, \upsilon} \; .
    \]
    This is immediate for $\tau = \upsilon$. If $\tau$ is a proper subset such that $\upsilon$ has $k$ additional elements, we have $\binom{k}{l}$ elements $\sigma$ satisfying $\tau \subseteq \sigma \subseteq \upsilon$ with $l$ more elements that $\tau$, so 
    \[
    \sum_{\sigma | \tau \subseteq \sigma \subseteq \upsilon \ } (-1)^{|\sigma| - |\tau|} = \sum_{l=0}^k (-1)^l \binom{k}{l} = (1 - 1)^k = 0.
    \]
\end{proof}
To prove Lemma \ref{complemma}, we will show that orthogonality to the various types of functions in ${\cal T}_G^*$ imposes simple conditions on the coefficients of a function $T$ in the Möbius basis, ultimately leaving us with a simple expression that is a linear combination of the functions $\hat{\chi}_e$ that span $(\ker {\cal E})^\perp$.
\begin{proof} (Lemma \ref{complemma})
Suppose that $T \in ({\cal T}_G^*)^\perp$. Then for $S \in {\cal T}_G^*$, we have 
    \begin{eqnarray*}
    0 &=& (S,T) = \sum_\chi S(\chi) T(\chi) \cr
    &=& \sum_\chi S (\chi) \sum_\eta \tilde{T}(\eta) \mu_\eta(\chi) = \sum_\chi S (\chi) \sum_{\eta \subseteq \chi} \tilde{T}(\eta) \cr
     &=&  \sum_\eta \tilde{T}(\eta) \sum_{\chi \supseteq \eta} S (\chi) \cr
    \end{eqnarray*}
where in the second line, we have expanded $T$ in terms of the Möbius basis. 

We now examine the consequences of this for various specific types of functions in ${\cal T}_G^*$.

1) Taking $S = T_0$, the only nonzero term in the second sum above is with $\chi = \emptyset$, so we have:
\[
\tilde{T}(\emptyset) = 0 \; .
\]

2) Taking $S = T_{\sigma, \tau ,\upsilon}$, we get 
\[
0 =  \sum_\eta \tilde{T}(\eta) ({\bf 1}_{\eta \subseteq  \sigma}+ {\bf 1}_{\eta \subseteq  \tau} + {\bf 1}_{\eta \subseteq  \upsilon} - {\bf 1}_{\eta \subseteq  \sigma \cup \tau } -{\bf 1}_{\eta \subseteq \tau \cup \upsilon} - {\bf 1}_{\eta \subseteq  \sigma \cup \upsilon} +
{\bf 1}_{\eta \subseteq  \sigma \cup \tau \cup \upsilon} )
\]
Here, the ${\bf 1}$s are 1 if the inclusion holds and zero otherwise. The sum of ${\bf 1}$s 1 if $\eta = \emptyset$ or $\eta$ is a subset of $\sigma \cup \tau \cup \upsilon$ that has overlap with all three sets, and 0 otherwise. Thus, using $\tilde{T}(\emptyset) = 0$, we have
\begin{equation}
\label{zerosum}
0 =  \sum_{\alpha \subset^* \sigma} \sum_{\beta \subset^* \tau} \sum_{\gamma \subset^* \upsilon} \tilde{T}(\alpha \cup \beta \cup \gamma)
\end{equation}
where the *s indicate that the subset must be non-empty.
Now, for any three distinct vertices $\{i,j,k\}$, taking $\sigma=\{i\}$, $\tau = \{j\}$, and $\upsilon = \{k\}$ gives 
\[
\tilde{T}(\{i,j,k\}) = 0 \; .
\]
Supposing that we have shown all $\tilde{T}$s vanish for subsets from 3 element up to $m-1$ elements, consider an $m$-element subset $\{k_1, \dots, k_m\}$. Taking $\sigma=\{k_1\}$, $\tau = \{k_2\}$, and $\upsilon = \{k_3, \dots, k_m\}$, we get from (\ref{zerosum}) that 
\[
0 = \tilde{T}(\{k_1, \dots, k_m\}) 
\]
since all other terms on the right hand side have already been shown to vanish. 

We have now shown that all Möbius coefficients corresponding to subsets with size 0 or size $n \ge 3$ vanish, so we can write
\begin{equation}
\label{shortsum}
T = \sum_i \tilde{T}(\{i\}) \mu_{\{i\}} +\sum_{i < j} \tilde{T}(\{i,j\}) \mu_{\{i,j\}}
\end{equation}

3) Taking $S = T_{\{i\},\{j\}}$ where $ij$ is not an edge gives 
\[
0 = -\tilde{T}(\{i,j\}) 
\]
since the terms with $\tilde{T}(\{i\})$ and $\tilde{T}(\{j\})$ each have two contributions that add to zero. Thus, the sum over two-element sets in (\ref{shortsum}) is restricted to edges.

4) Finally, consider $S = T_{\sigma} = [\sigma] - [\sigma^c]$. Orthogonality with $S$ gives
\[
0 = T(\sigma) - T(\sigma^c) \; .
\]
To see the implications for the Möbius coefficients in (\ref{shortsum}) it will be useful to note that
\begin{eqnarray*}
\mu_{i}(\sigma) &=& (x_\sigma)_i \cr
\mu_{i}(\sigma^c) &=&  1-(x_\sigma)_i \cr
\mu_{ij}(\sigma) &=&  (x_\sigma)_i(x_\sigma)_j \cr
\mu_{ij}(\sigma^c) &=&  (1-(x_\sigma)_i)(1-(x_\sigma)_j)
\end{eqnarray*}
Rewriting the constraint $T(\sigma) = T(\sigma^c)$ by making use of these expressions in (\ref{shortsum}) we find
\[
\sum_i \tilde{T}(\{i\}) (x_\sigma)_i +\sum_{\{i,j\} \in E} \tilde{T}(\{i,j\}) (x_\sigma)_i(x_\sigma)_j = \sum_i \tilde{T}(\{i\}) ( 1-(x_\sigma)_i ) +\sum_{\{i,j\} \in E} \tilde{T}(\{i,j\})  (1-(x_\sigma)_i)(1-(x_\sigma)_j)
\]
The terms quadratic in $x$ already match between the two sides. In order that the terms linear in $x$ match, we must have for every vertex $i$, 
\[
2\tilde{T}(\{i\}) + \sum_{j |\{i,j\} \in E} \tilde{T}(\{i,j\}) = 0
\]
This also ensures (by summing over $i$) that the $x$-independent terms match. With the new constraint, we have that $T$ can be written as
\begin{equation}
\label{Tfin}
T(\sigma) = \sum_{\{i,j\} \in E} \tilde{T}(\{i,j\}) ((x_\sigma)_i(x_\sigma)_j - (x_\sigma)_i/2 - (x_\sigma)_j/2)
\end{equation}
In Theorem \ref{thm:Fourier}, we showed that the orthogonal complement of ${\cal T}_G$ is spanned by functions $\hat{\chi}_e = (\chi_e - \chi_\emptyset)/2$ for $e \in E$. If $e = \{i,j\}$, we note that
\[
\hat{\chi}_e(\sigma) = (\chi_e(\sigma) - \chi_\emptyset(\sigma))/2 = {1\over 2} ((-1)^{(x_\sigma)_i + (x_\sigma)_j} - 1) = 2 [(x_\sigma)_i (x_\sigma)_j - (x_\sigma)_i/2 - (x_\sigma)_j/2)]  \; ,
\]
recalling that the $x$s only take values $0,1$. Thus, the result (\ref{Tfin}) is precisely the statement that $T$ is in the span of $\hat{\chi}_e$, or $T \in {\cal T}_G^\perp$. This completes the proof that $({\cal T}_G^*)^\perp \subset (\ker {\cal E})^\perp$.
\end{proof}
The details of the proof here suggest one way to choose a particular basis of ${\cal T}_H$ from among the functions in Definition \ref{def:indicator}:
\begin{proposition}
    Choose an ordering of vertices and let $\sigma_k$ denote the set containing the $k$th smallest element of a subset $\sigma$ according to the ordering. Then the following functions provide a basis for ${\cal T}_G$: 1) $T_0$  2) $T_{\{i\}}$ for all vertices $i$ 3) $T_{\{i,j\}}$ for non-edges $\{i,j\}$, and 4) $T_{\sigma_1,\sigma_2, \sigma \backslash (\sigma_1 \cup \sigma_2)}$ for each $\sigma$ with $|\sigma| \ge 3$.
\end{proposition}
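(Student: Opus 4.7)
The plan is to verify that the listed functions form a basis of ${\cal T}_G$ via three steps: a count, a membership check, and showing that their joint orthogonal complement coincides with ${\cal T}_G^\perp$. Counting gives $1 + n + (\binom{n}{2}-|E|) + (2^n - 1 - n - \binom{n}{2}) = 2^n - |E|$ functions, matching $\dim {\cal T}_G$ from Theorem \ref{thm:Fourier}. Each listed function is of one of the types in Definition \ref{def:T14} and so lies in ${\cal T}_G$ by Lemma \ref{lem:inkernel}. Once spanning is established, the dimension match forces linear independence, so it remains only to prove spanning.

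For spanning I plan to mimic the proof of Lemma \ref{complemma}, observing that most of its steps go through using only the listed basis functions. Let $T$ be orthogonal to every proposed element. Orthogonality with $T_0$ yields $\tilde{T}(\emptyset) = 0$. For each subset $\sigma = \{k_1 < \cdots < k_m\}$ with $m \ge 3$, the proposition's type (4) function is precisely the triple $(\{k_1\}, \{k_2\}, \{k_3, \ldots, k_m\})$ used verbatim in the induction step of Lemma \ref{complemma}, so orthogonality with these reproduces the conclusion $\tilde{T}(\sigma) = 0$ for all $|\sigma| \ge 3$. Orthogonality with the type (3) functions $T_{\{i\},\{j\}}$ for non-edge pairs $\{i,j\}$ then gives $\tilde{T}(\{i,j\}) = 0$ for every non-edge.

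The main obstacle is the vertex--edge Möbius constraint: Lemma \ref{complemma}'s original proof uses $T_\sigma$ for every subset $\sigma$, whereas the proposition's basis only includes $T_\sigma$ for singletons $\sigma = \{i\}$. I plan to show that the singleton constraints alone suffice. After the earlier Möbius coefficients have been killed, the singleton condition $(T_{\{i\}}, T) = T(\{i\}) - T(V \setminus \{i\}) = 0$ expands to
\[
2\tilde{T}(\{i\}) + \sum_{e \ni i} \tilde{T}(e) \;=\; \sum_j \tilde{T}(\{j\}) + \sum_{e \in E} \tilde{T}(e) \,,
\]
whose right-hand side is a common constant $C$ independent of $i$. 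Summing over all $i$, each edge is counted twice on the left, giving $2C = nC$; for $n \ge 3$ this forces $C = 0$, and the singleton relations collapse to the constraint $2\tilde{T}(\{i\}) + \sum_{e \ni i} \tilde{T}(e) = 0$ of Lemma \ref{complemma}. Hence $T \in {\cal T}_G^\perp$, completing the spanning argument.
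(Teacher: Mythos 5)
Your proof is correct and follows the same strategy as the paper's: count the functions, check membership in ${\cal T}_G$ via Lemma \ref{lem:inkernel}, and rerun the proof of Lemma \ref{complemma} using only the listed functions. The one place where the two diverge is the final step: the paper simply asserts that ``orthogonality with $T_{\{i\}}$ for the $n$ vertices $i$ is enough to give the $n$ equations associated with vanishing of linear terms in $x$,'' whereas you actually verify this, and your verification shows the assertion is not immediate --- the $n$ singleton conditions only give $2\tilde{T}(\{i\}) + \sum_{e \ni i}\tilde{T}(e) = C$ for a common constant $C$, and one needs the extra summing argument $2C = nC$ to kill $C$. This is a genuine improvement in rigor, and it surfaces a caveat the paper misses: the argument (and in fact the proposition itself) fails at $n = 2$, where the listed functions $T_{\{1\}}$ and $T_{\{2\}}$ are negatives of each other and the set does not span ${\cal T}_G$ (e.g.\ $[\{1,2\}] - [\emptyset]$ is missed when $\{1,2\}$ is an edge). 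You correctly restrict to $n \ge 3$; it would be worth stating explicitly that this is a hypothesis of the proposition, not merely of the proof.
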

\begin{proof}
    Each of the functions here is in ${\cal T}_G$ and their number is $1 + n + (\binom{n}{2} - |E|) + (2^n - \binom{n}{2} - n - 1) = 2^n - |E| = \dim{\cal T}_G$. To show they form basis, we need to show that they are independent. If they were not, the orthogonal complement of their span would be larger than the orthogonal complement of ${\cal T}_G$. But the preceding proof shows that the orthogonal complement of the span of these functions is in the orthogonal complement of ${\cal T}_G$, since in that proof, we only needed to use orthogonality to functions of the type described here. In particular, for step 4, orthogonality with $T_{\{i\}}$ for the $n$ vertices $i$ is enough to give the $n$ equations associated with vanishing of linear terms in $x$. 
\end{proof}

\section{Cancelling divergences from multiple intersections}

So far, we have been concerned about cancellation of divergences related to the pairwise intersections of regions. Our considerations apply equally well to the leading area-law divergences and to any subleading divergences or divergences associated with non-smooth features of these boundary regions, for example corners in a one dimensional boundary between a pair of regions in two dimensions. To see this, we can use the fact that the divergences appearing in entropies are independent of the state (they arise from the UV degrees of freedom in the quantum field theory whose entanglement structure is the same in different finite-energy states). For a pure state of the whole system, the entropy of a region $A$ equals the entropy of the complementary region $A^c$, so in particular, all divergences associated with the boundary of $A$ must match the divergences associated with the boundary of $A^c$. Thus, our condition that the sum of coefficients for all regions having a particular boundary component (regardless of which side of the boundary the region occupies) should ensure that all types of divergences associated with this boundary component will cancel. 

Additional divergences may arise from the intersection of $m>2$ regions at some higher codimension feature $J$ \cite{CasiniHuerta2007_Universal2p1,CasiniHuerta2009_Review,BuenoMyersWitczak2015_PRL,BuenoMyers2015_JHEP,BuenoMyersWitczak2015_Twist,KallinEtAl2014_JStatMech,StoudenmireEtAl2014_PRB,HaywardSierensEtAl2017_TrihedralPRB,BednikEtAl2019_TrihedralPRB,SeminaraSistiTonni2017_BCFT,Berthiere2019_BoundaryCornerPRB,Nishioka2018_RMP}. Let $I_m$ be the subset of vertices corresponding to the regions $A_i$ whose boundaries intersect at $J$. For any non-empty proper subset $\tau \subset I_m$, the region $A_\tau$ has a potential divergence associated with the ``corner'' of $A_\tau$ that touches $J$.\footnote{It is possible that no such divergence appears: for example, with three two-dimensional regions intersecting at a point, one of the three angles might be $\pi$ in which case there is no corner divergence for the corresponding region. It may also be that the corner divergence is absent if the corner is sufficiently ``cuspy'' \cite{BuenoCasiniWitczak2019_SingularJHEP}.} In an entropy sum, this divergence (if it exists) will be cancelled provided that the sum of coefficients for sets $\sigma$ containing $\tau$ but not $I_m \backslash \tau$ (the complement of $\tau$ in $I_m$) plus the sum of coefficients for sets $\sigma$ containing $I_m \backslash \tau$ but not $\tau$ vanishes. Here, we are using the fact that the divergence associated with a corner will be the same as the divergence associated with its complementary corner. As above, this follows since divergences are state independent, and the entropy of a subsystem is the same as the entropy of its complement in a pure state. 

We can generalize our earlier analysis in order to analyze these additional conditions associated with multiple intersections. In order to capture the information about all intersections, we can generalize from a graph to a {\it hypergraph} $H$, which is again a set $H_1 \equiv V$ of $n$ vertices associated with the $n$ regions, but now with sets $H_2 \equiv E, H_3, H_4 \dots$, of pairs, triples, quadruples, etc... of vertices corresponding to the intersections of 3,4, etc... regions. We define ${\cal I}_H = \cup_{m \ge 2} H_m$ and refer to this as the set of all {\it intersections}.

Starting from the space ${\cal T}_\mathbb{R}$ of functions on subsets of the vertices, we can now define a subspace ${\cal T}_H$ of functions that are safe according to the conditions above:
\begin{definition}
    Let $H$ be a hypergraph and ${\cal T}_{\mathbb{R}}$ be the space of real functions on subsets of its vertices. We define the subspace ${\cal T}_H$ of {\it safe} functions by the condition that for each intersection $I \in {\cal I}_H$ and each bipartition $\{a,b\}$ of $I$, 
\begin{equation}
\label{gencond}
\sum_{\sigma | a \in \sigma, b \notin \sigma} T(\sigma) + \sum_{\sigma | a \notin \sigma, b \in \sigma} T(\sigma) = 0 \; .
\end{equation}
\end{definition}
Equivalently, defining functions
\[
g_{a,b}(\sigma) = \left\{ \begin{array}{ll} 1 & (a \cup b) \cap \sigma \in \{a,b\} \cr
0 & \qquad {\rm otherwise} \end{array}  \right.
\]
for nonintersecting sets $a,b$, ${\cal T}_H$ is the subspace defined by $(g_{a,b}, T) = 0$ for all bipartitions $\{a,b\}$ of all intersections $I$. 

\subsection{A Fourier basis for ${\cal T}_H$}

We will now see that, as for the edge constraints, the constraints related to general intersections may be expressed very simply in the Fourier basis:

\begin{lemma}
\label{FourLemma}
For a general hypergraph $H = \{H_m\}$, a function $T \in {\cal T}_\mathbb{R}$ is in ${\cal T}_H$ if and only if for any intersection $I \in {\cal I}_H$ and any even order subset $\tau \subset I$, the Fourier coefficient $\hat{T}(\tau)$ is equal to $\hat{T}(\emptyset)$. 
\end{lemma}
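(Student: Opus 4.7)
My plan is to express the defining conditions of ${\cal T}_H$ as linear equations on the Fourier coefficients $\hat{T}(\mu)$ by explicitly Fourier-expanding each indicator $g_{a,b}$, and then to solve the resulting linear system by a second Fourier inversion on the Boolean cube $2^I$. The computation naturally isolates the even-order subsets of $I$, which is precisely what the lemma asserts.

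First, for a fixed intersection $I \in {\cal I}_H$ and (proper) bipartition $\{a,b\}$ of $I$, I would compute $\hat{g}_{a,b}(\mu)$ directly from the definition. Writing any $\sigma \subseteq V$ as $\sigma = \alpha \cup \beta$ with $\alpha \subseteq I$ and $\beta \subseteq V\setminus I$, and splitting $\mu$ analogously, the sum over $\beta$ forces $\mu \subseteq I$ (otherwise contributing $0$), while the sum over $\alpha \in \{a,b\}$ yields $(-1)^{|\mu\cap a|} + (-1)^{|\mu\cap b|}$. Since $a \cup b = I$ and $a \cap b = \emptyset$, we have $|\mu\cap a| + |\mu\cap b| = |\mu|$, so this factor equals $(-1)^{|\mu\cap a|}[1+(-1)^{|\mu|}]$, which vanishes unless $|\mu|$ is even. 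Applying Parseval in the form $(g_{a,b},T) = 2^n \sum_\mu \hat{g}_{a,b}(\mu)\hat{T}(\mu)$, the defining condition $(g_{a,b},T)=0$ becomes, up to a nonzero normalization,
\[
F_I(a) \;:=\; \sum_{\substack{\mu \subseteq I \\ |\mu|\text{ even}}} (-1)^{|\mu\cap a|}\,\hat{T}(\mu) \;=\; 0
\]
for every $a$ with $\emptyset \neq a \subsetneq I$.

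Now I would exploit that $a \mapsto (-1)^{|\mu\cap a|}$ are exactly the Walsh characters on $2^I$, so the display above is itself a Fourier expansion of $F_I$ on the smaller cube $2^I$ supported on even-order $\mu$. Inverting gives $2^{|I|}\hat{T}(\mu) = \sum_{a \subseteq I}(-1)^{|\mu\cap a|} F_I(a)$ for each even $\mu \subseteq I$. Under the safety conditions only $a = \emptyset$ and $a = I$ survive on the right-hand side, and since $(-1)^{|\mu\cap I|} = (-1)^{|\mu|} = 1$ for even $\mu$, both terms contribute equally. Thus $\hat{T}(\mu) = F_I(\emptyset)/2^{|I|-1}$ independent of $\mu$; setting $\mu = \emptyset$ identifies the common value as $\hat{T}(\emptyset)$, establishing the forward direction. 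The converse is then immediate from the same computation: if $\hat{T}(\mu)=\hat{T}(\emptyset)$ for all even $\mu\subseteq I$, then $F_I(a) = \hat{T}(\emptyset)\sum_{\mu\subseteq I,\,|\mu|\text{ even}}(-1)^{|\mu\cap a|}$, and a short generating-function evaluation using $(1-x)^{|a|}(1+x)^{|b|}$ at $x=\pm 1$ shows this sum vanishes whenever both $a$ and $b$ are nonempty.

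The main delicacy is bookkeeping the ``boundary'' values $F_I(\emptyset)$ and $F_I(I)$ in the inversion step, making sure to track that both survive and contribute equally; once this is noted the rest is routine Boolean-cube Fourier analysis. The argument is reassuring on dimensional grounds: the number $2^{|I|-1}-1$ of proper bipartitions of $I$ matches the number of nonempty even-order subsets of $I$, so the system has exactly the expected rank and the solution space is one-dimensional per intersection, matching the single degree of freedom $\hat{T}(\emptyset)$ that the conclusion predicts.
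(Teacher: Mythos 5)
Your proof is correct, and its first two-thirds coincide with the paper's: computing the Walsh transform of $g_{a,b}$, finding it supported on $\mu\subseteq I$ with a factor $(-1)^{|\mu\cap a|}+(-1)^{|\mu\cap b|}$ that kills odd $|\mu|$, and reducing the safety condition to $F_I(a)=\sum_{\mu\subseteq I,\,|\mu|\ \mathrm{even}}(-1)^{|\mu\cap a|}\hat T(\mu)=0$ for all proper nonempty $a\subsetneq I$ is exactly the paper's derivation of (\ref{multicond2}) and the display that follows it. Where you genuinely differ is in how you solve this linear system. The paper fixes a basepoint $1\in I$, reindexes bipartitions by nonempty subsets of $\tilde I=I\setminus\{1\}$ and even subsets of $I$ by all subsets of $\tilde I$ via the bijection $e$, and identifies the system as the vanishing of a function $C$ on $2^{\tilde I}$ away from $\emptyset$, which forces all of its Fourier coefficients to agree. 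You instead invert directly on the full cube $2^I$: since $F_I$ is by construction a Walsh expansion supported on even $\mu$ with coefficients $\hat T(\mu)$, inversion gives $2^{|I|}\hat T(\mu)=\sum_{a\subseteq I}(-1)^{|\mu\cap a|}F_I(a)$, only $a=\emptyset$ and $a=I$ survive under the safety conditions, and they contribute equally because $(-1)^{|\mu\cap I|}=(-1)^{|\mu|}=1$ for even $\mu$; hence $\hat T(\mu)$ is constant over even $\mu\subseteq I$ and, evaluating at $\mu=\emptyset$, equals $\hat T(\emptyset)$. This buys you a basepoint-free argument and makes explicit which two values of $F_I$ are left unconstrained and why they coincide ($F_I(a)=F_I(I\setminus a)$ on even support). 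Your converse via $(1-x)^{|a|}(1+x)^{|b|}$ at $x=\pm 1$ is a clean way to see $F_I(a)=0$ once all even coefficients are equal, and your rank count ($2^{|I|-1}-1$ equations on $2^{|I|-1}$ unknowns) matches the paper's observation that the solution space per intersection is one-dimensional.
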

This immediately leads (following the same logic as in the proof of Theorem \ref{thm:Fourier}) to
\begin{theorem}\label{thm:Fourier2}
For a hypergraph $H$, define $I_E$ to be the collection of even order subsets of vertices (including the empty set) that are subsets of some intersection $I \in {\cal I}_H$. Then an orthogonal basis for ${\cal T}_H$ is provided by the Fourier basis elements $\chi_\sigma$ for $\sigma \notin I_E$ together with the function
\[
\chi_0 \equiv \sum_{\sigma \in I_E} \chi_\sigma \; . 
\]
The dimension of ${\cal T}_H$ is $2^n - |I_E| + 1$.
\end{theorem}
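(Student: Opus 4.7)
The plan is to first prove Lemma \ref{FourLemma} (the Fourier-coefficient characterization of ${\cal T}_H$) and then derive Theorem \ref{thm:Fourier2} by essentially the same orthogonal-complement argument used in the proof of Theorem \ref{thm:Fourier}. The lemma contains all of the new content; once it is established, the claimed basis and dimension follow by a short verification.

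To prove the lemma, I would fix an intersection $I\in{\cal I}_H$ with $|I|=m$ and a non-trivial bipartition $\{a,b\}$ of $I$, and translate the condition $(g_{a,b},T)=0$ into the Fourier basis. Writing any $\sigma$ in the support of $g_{a,b}$ as $\sigma=c\cup\sigma'$ with $c\in\{a,b\}$ and $\sigma'\subset V\setminus I$, and decomposing each $\tau=\tau_I\cup\tau'$ with $\tau_I\subset I$ and $\tau'\subset V\setminus I$, the sum over $\sigma'$ of $(-1)^{x_{\tau'}\cdot x_{\sigma'}}$ collapses to $2^{n-m}$ when $\tau'=\emptyset$ and vanishes otherwise. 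This reduces the condition to $\sum_{\tau\subset I}\hat{T}(\tau)\bigl[(-1)^{|\tau\cap a|}+(-1)^{|\tau\cap b|}\bigr]=0$ for every non-trivial bipartition of every $I$.

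The next step uses that $|\tau\cap a|+|\tau\cap b|=|\tau|$ on $\tau\subset I$, so the bracket vanishes for odd $|\tau|$ and equals $2(-1)^{|\tau\cap a|}$ for even $|\tau|$. The remaining conditions read $\sum_{\tau\subset I,\,|\tau|\text{ even}}\hat{T}(\tau)(-1)^{|\tau\cap a|}=0$ for every $a$ with $\emptyset\neq a\subsetneq I$. To conclude, I would package the even-$\tau$ coefficients into an auxiliary function $y:2^I\to\mathbb{R}$ with $y(\tau)=\hat{T}(\tau)$ for $|\tau|$ even and $y(\tau)=0$ otherwise; the conditions state exactly that the Fourier transform of $y$ on the Boolean cube $2^I$ is supported on $\{\emptyset,I\}$, so $y(\tau)=\alpha+\beta(-1)^{|\tau|}$. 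The required vanishing of $y$ on odd subsets forces $\alpha=\beta$, so $y$ is a single constant on even subsets, proving $\hat{T}(\tau)=\hat{T}(\emptyset)$ for every even $\tau\subset I$, which is precisely the lemma.

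Given the lemma, the theorem is bookkeeping. For $\sigma\notin I_E$, all Fourier coefficients of $\chi_\sigma$ at elements of $I_E$ vanish, so the condition $\hat{T}(\tau)=\hat{T}(\emptyset)$ for $\tau\in I_E$ holds trivially. For $\chi_0=\sum_{\sigma\in I_E}\chi_\sigma$, every Fourier coefficient at an element of $I_E$ equals $1$, again satisfying the condition. Conversely, any $T\in{\cal T}_H$ has $\hat{T}(\tau)=\hat{T}(\emptyset)$ on $I_E$ by the lemma, so $T=\hat{T}(\emptyset)\chi_0+\sum_{\sigma\notin I_E}\hat{T}(\sigma)\chi_\sigma$, exhibiting $T$ in the proposed span. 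Orthogonality follows from orthogonality of the Fourier basis (since $\chi_0$ lies in the span of $\{\chi_\tau:\tau\in I_E\}$, which is orthogonal to each $\chi_\sigma$ with $\sigma\notin I_E$), and the basis has size $(2^n-|I_E|)+1=2^n-|I_E|+1$. The main obstacle is the Fourier calculation in the lemma, specifically the recognition that the residual condition on $2^I$ is itself a Fourier-support statement that, combined with the parity constraint, pins $y$ down to a single constant on even subsets; everything downstream is a direct adaptation of the proof of Theorem \ref{thm:Fourier}.
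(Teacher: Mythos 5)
Your proposal is correct and follows essentially the same route as the paper: reduce the constraint $(g_{a,b},T)=0$ to a condition on the Fourier coefficients $\hat{T}(\tau)$ for even $\tau\subset I$, show the resulting system forces them all to equal $\hat{T}(\emptyset)$, and then read off the basis and dimension exactly as in Theorem \ref{thm:Fourier}. The only (cosmetic) difference is in the last linear-algebra step: the paper transports the even subsets of $I$ to $2^{I\setminus\{1\}}$ via a bijection before applying Fourier inversion, whereas you extend by zero to odd subsets and read the conditions as a Fourier-support statement on $2^I$ directly; both are valid and your version is, if anything, slightly cleaner.
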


\begin{proof} (Lemma \ref{FourLemma})
Consider an intersection $I$ and some bipartition $\{a,b\}$ of $I$.
Expressed in terms of the Fourier basis, the condition (\ref{gencond}) for the pair $\{a,b\}$ is
\begin{equation}
\label{multicond}
\sum_\sigma \hat{T}(\sigma) (g_{a,b}, \chi_\sigma) = 0 \; ,
\end{equation}
where
\[
(g_{a,b}, \chi_\sigma) = \sum_\tau g_{a,b}(\tau)  (-1)^{x_\sigma \cdot x_\tau} \; .
\]
Now, for any vertex $i$ not in $I$ and any set $\tau$ for which $g_{a,b} \ne 0$, there is a distinct complementary set $\tau'$ with $g_{a,b} \ne 0$ obtained by removing / adding the element $i$ if it is present / not present. If $i$ is in $\sigma$, then the terms in the sum above corresponding to complementary sets $\tau$ and $\tau'$ will cancel each other, so the sum vanishes. Thus, inner prduct above is nonzero only for $\sigma \subset I$. 

The terms in the sum for which $g_{a,b}(\tau)$ is nonvanishing are of the form $\tau = a \cup \varphi$ or $\tau = b \cup \varphi$ for $\varphi \cap I = \emptyset$. So we have, for $\sigma \subset I$,
\[
(g_{a,b}, \chi_\sigma) = \sum_\varphi (-1)^{x_\sigma \cdot x_{a \cup \varphi}} + (-1)^{x_\sigma \cdot x_{b \cup \varphi}} .
\]
Since $\sigma \cap \varphi = \emptyset$, we have $x_\sigma \cdot x_{a \cup \varphi} = x_\sigma \cdot x_{a}$, and $x_\sigma \cdot x_{b \cup \varphi} = x_\sigma \cdot x_{b}$. Thus, for any $\sigma$,
\[
(g_{a,b}, \chi_\sigma) =  \left\{ \begin{array}{ll} 2^{n-|I|} ((-1)^{x_\sigma \cdot x_a} + (-1)^{x_\sigma \cdot x_b}) &  \qquad \sigma \subset I \cr
0 & \qquad {\rm otherwise} \end{array}  \right.
\]
It follows that the condition (\ref{multicond}) involves only Fourier coefficients $\hat{T}_\sigma$ for sets $\sigma \subset I$, reducing to 
\begin{equation}
\label{multicond2}
\sum_{\sigma \subset I} \hat{T}(\sigma) ((-1)^{x_\sigma \cdot x_a} + (-1)^{x_\sigma \cdot x_b}) = 0 \; .
\end{equation}
The expression in brackets is nonzero if and only if $a \cap \sigma$ and $b \cap \sigma$ have the same parity, which is equivalent to $|\sigma| = |(a \cap \sigma) \cup (b \cap \sigma)|$ being even. We can now further reduce the condition to
\[
\sum_{\sigma \subset I, |\sigma| \; even}  \hat{T}(\sigma) (-1)^{x_\sigma \cdot x_a} = 0
\]
Since we have one equation for each bipartition of $I$, this is a set of $2^{|\sigma|-1}-1$ equations for $2^{|\sigma|-1}$ unknowns. We will show that these equations are all independent, leaving a one-dimensional space of solutions in which all $\hat{T}$s appearing here are equal. 

Choose some arbitrary element $1 \in I$ and let $\tilde{I} = I \backslash 1$. Then bipartitions $\{a,b\}$ are in one-to-one correspondence with non-empty subsets $c \in \tilde{I}$ via $a= c, b = I \backslash c$. Define a bijection $e$ from subsets of $\tilde{I}$ to even order subsets of $I$ by 
\[
e(\sigma) = \left\{ \begin{array}{ll} \sigma & \qquad |\sigma| \; even \cr
{1} \cup \sigma & \qquad |\sigma| \; odd \cr
\end{array}
\right.
\]
Then our condition is equivalent to the statement that for every $c \in \tilde{I}$
\begin{equation}
\label{reduced}
\sum_{\sigma \subset \tilde{I}} \hat{T}(e(\sigma)) (-1)^{y_\sigma \cdot y_c} = 0
\end{equation}
where $y_\sigma \in \mathbb{R}^{|I| - 1}$ is the characteristic function for a subset $\sigma \in \tilde{I}$. Defining $\{\tilde{\chi}_\sigma\}$ to be the Fourier basis for functions on subsets of $\tilde{I}$ and defining a function $C$ on subsets of $\tilde{I}$ by
\[
C = \sum_\sigma \hat{T}(e(\sigma)) \tilde{\chi}_\sigma \; ,
\]
the condition (\ref{reduced}) is simply that $C(c) = 0$ for all non-empty $c \subset \tilde{I}$. This is equivalent to the equality of all the Fourier coefficients $\hat{T}(e(\sigma))$, which are the Fourier coefficients associated with the even subsets of $I$. 
\end{proof}
\begin{remark}
    For the case where $H$ is an ordinary graph, theorem $(\ref{thm:Fourier2})$ reduces to (\ref{thm:Fourier}): in this case intersections are edges, and the set $I_E$ is the empty set together with the set of edges.
\end{remark}

\subsection{Divergence-free mutual information and tripartite information}

In this section, we consider the information basis functions defined in Definition \ref{def:T14} and investigate which of these is safe from divergences associated with higher codimension intersections.

To state the main result, it will be convenient to define an appropriate generalization of the notion of subsets being connected by an edge. 
\begin{definition}
Given a set $\alpha \in I_E$ (i.e. any even order subset of some intersection $I \in {\cal I}_H$) we say that a collection of disjoint subsets $\{\sigma_i\}$ of vertices is $\alpha$-odd if the intersection $\sigma_i \cap \alpha$ has odd order for all $i$. We say the collection is $I_E$-adjacent if it is $\alpha$-odd for some $\alpha \in I_E$.
\end{definition}
The following gives an equivalent criterion for $I_E$-adjacency:
\begin{proposition}
\label{prop:adjacent}
    A collection $\{\sigma_i\}$ of $m$ mutually disjoint subsets is $I_E$-adjacent if and only if one of the following holds:
    
    i) $m$ is even and there exists $I \in {\cal I}_H$ intersecting each of $\{\sigma_i\}$. 
    
    ii) $m$ is even and there exists $I \in {\cal I}_H$ intersecting each of $\{\sigma_i\}$ and another region $\tau$ disjoint from each $\sigma_i$. 
\end{proposition}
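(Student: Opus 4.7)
The plan is to prove both directions by a direct parity count, exploiting the decomposition
\[
|\alpha| \;=\; \sum_{i=1}^m |\sigma_i \cap \alpha| \;+\; \Bigl|\alpha \setminus \bigcup_i \sigma_i\Bigr|,
\]
valid because the $\sigma_i$ are pairwise disjoint. Reducing modulo $2$, the $\alpha$-odd condition together with $|\alpha|$ even forces $|\alpha \setminus \bigcup_i \sigma_i| \equiv m \pmod 2$, and this single arithmetic identity drives the whole statement.

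For the forward implication, I would suppose $\{\sigma_i\}$ is $I_E$-adjacent, witnessed by some $\alpha \subseteq I$ with $I \in {\cal I}_H$, $|\alpha|$ even, and each $|\sigma_i \cap \alpha|$ odd. Two cases arise from the parity identity. When $m$ is even, the residual $|\alpha \setminus \bigcup_i \sigma_i|$ is even (possibly zero) and $\alpha$ already certifies that $I$ meets every $\sigma_i$, yielding case (i). When $m$ is odd, the residual must be of odd order, so there exists a vertex $v \in I$ lying outside every $\sigma_i$; any set containing $v$ and disjoint from each $\sigma_i$ then serves as the auxiliary $\tau$ of case (ii).

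For the converse, I would build an explicit witness by choosing representatives. Under case (i), I pick $v_i \in I \cap \sigma_i$ for each $i$ and take $\alpha = \{v_1,\dots,v_m\}$; disjointness of the $\sigma_i$ forces these vertices to be distinct, so $\alpha \subseteq I$ has even cardinality $m$, and each $|\sigma_i \cap \alpha| = 1$ is odd, placing $\alpha \in I_E$. Under case (ii), I additionally pick $w \in I \cap \tau$ and set $\alpha = \{v_1,\dots,v_m,w\}$; disjointness of $\tau$ from every $\sigma_i$ guarantees $w$ is distinct from the $v_i$, so the cardinality of $\alpha$ has opposite parity to $m$, again landing on an even value, while each $|\sigma_i \cap \alpha| = 1$ remains odd.

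No step presents a genuine obstacle: the argument is a bookkeeping exercise about which vertices of $I$ are available to build an even-size subset whose intersection with each $\sigma_i$ is odd. The only care required is to track the residual $|\alpha \setminus \bigcup_i \sigma_i|$ correctly, since that term's parity is exactly what distinguishes the two cases and dictates whether an extra auxiliary region $\tau$ is needed to adjust $|\alpha|$ to an even value.
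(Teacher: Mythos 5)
Your proof is correct and follows essentially the same route as the paper's: the forward direction via the parity identity $|\alpha|=\sum_i|\sigma_i\cap\alpha|+|\alpha\setminus\bigcup_i\sigma_i|$ to extract the extra vertex when $m$ is odd, and the converse by assembling $\alpha$ from one representative vertex per $\sigma_i$ (plus one from $\tau$ in the odd case). You also correctly read case (ii) as the odd-$m$ case, which is what the statement must mean (the ``$m$ is even'' in (ii) is evidently a typo).
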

\begin{proof}
    If $\{\sigma_i\}$ is $I_E$-adjacent, there is $\alpha \in I_E$ such that $|\alpha \cap \sigma_i|$ is odd for each $i$. It follows that $\alpha \cap \sigma_i \ne \emptyset$ for each $i$. For odd $m$, the union of all $\alpha \cap \sigma_i$ has odd order. Since $|\alpha|$ is even, there must be an element of $\alpha$ not in any $\sigma_i$. Call the set with this element $\tau$.
    The set $\alpha$ is part of some intersection $I \in {\cal I}_H$ by the definition of $I_E$. This $I$ intersects all $\sigma_i$ in the even case and all $\sigma_i$ plus $\tau$ in the odd case. Conversely, if such an $I$ exists, form $\alpha$ by choosing one element of $\sigma_i \cap I$ for each $i$ and additionally an element of $\tau \cap I$ in the odd case. The collection $\{\sigma_i \}$ is $\alpha$-odd in both cases, so also $I_E$-adjacent
\end{proof}
We can now define a subset of the functions in Definition \ref{def:T14} that we will show is the subset belonging to ${\cal T}_H$.
\begin{definition}\label{def:H4}
For any hypergraph $H$ with even intersection subsets $I_E$, define the following functions on subsets of the vertices:
\begin{enumerate}
    \item Let $T_0$ be the function that assigns 1 to the empty set and 0 to all other subsets.
    \item Let $T_\sigma = [\sigma] - [\sigma^c]$ for any subset $\sigma$.
    \item Let $T_{\sigma,\tau} = [\sigma] + [\tau] - [\sigma \cup \tau]$ for any non-$I_E$-adjacent disjoint $\sigma, \tau$. 
    \item Let $T_{\sigma,\tau,\upsilon} = [\sigma] + [\tau]  + [\upsilon] - [\sigma \cup \tau ] - [\sigma\cup \upsilon] -[ \tau \cup \upsilon] + [\sigma \cup \tau \cup \upsilon]$ for any pairwise disjoint, non-$I_E$-adjacent $\sigma$, $\tau$, and $\upsilon$. 
\end{enumerate}
\end{definition}
With these definitions, we have:
\begin{proposition}
\label{thm:Möbius2}
    The functions in Definition \ref{def:H4} are precisely the functions in Definition \ref{def:T14} belonging to ${\cal T}_H$.
\end{proposition}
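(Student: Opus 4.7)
The plan is to apply Lemma \ref{FourLemma}, which tells us that a function $T$ lies in $\mathcal{T}_H$ iff $\hat{T}(\alpha) = \hat{T}(\emptyset)$ for every $\alpha \in I_E$. So I would compute the Fourier coefficients of each of the four types of functions from Definition \ref{def:T14} and read off when the equality holds. For type 1, $\hat{T}_0 \equiv 2^{-n}$ is constant and the condition is trivial. For type 2, substituting $x_{\sigma^c} = \mathbf{1} - x_\sigma$ gives $\hat{T}_\sigma(\alpha) = 2^{-n}(-1)^{x_\alpha \cdot x_\sigma}\bigl[1 - (-1)^{|\alpha|}\bigr]$, which vanishes on every even-order $\alpha$; since all elements of $I_E$ (including $\emptyset$) have even order, both types belong to $\mathcal{T}_H$ with no extra constraint, matching clauses 1 and 2 of Definition \ref{def:H4}.

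For types 3 and 4 the calculation is the main step. Setting $p = x_\alpha \cdot x_\sigma$, $q = x_\alpha \cdot x_\tau$, $r = x_\alpha \cdot x_\upsilon$ (all taken modulo 2) and using disjointness of $\sigma, \tau, \upsilon$ to linearize the dot products with unions, I would obtain
\[
\hat{T}_{\sigma,\tau}(\alpha) = 2^{-n}\bigl[(-1)^p + (-1)^q - (-1)^{p+q}\bigr]
\]
and the analogous eight-term expression for $\hat{T}_{\sigma,\tau,\upsilon}(\alpha)$. A direct case check shows the pair expression equals $2^{-n}$ in every parity combination except $p = q = 1$, where it equals $-3 \cdot 2^{-n}$; the triple expression likewise equals $2^{-n}$ except at $p = q = r = 1$, where it equals $-7 \cdot 2^{-n}$. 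Thus $T_{\sigma,\tau} \in \mathcal{T}_H$ iff no $\alpha \in I_E$ has both $|\sigma \cap \alpha|$ and $|\tau \cap \alpha|$ odd, and $T_{\sigma,\tau,\upsilon} \in \mathcal{T}_H$ iff no $\alpha \in I_E$ has all three intersections odd. These are exactly the non-$I_E$-adjacency conditions in clauses 3 and 4 of Definition \ref{def:H4}, so the proposition follows.

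The only real obstacle is the eight-term sign sum for the tripartite coefficient and the clean identification of its unique ``bad'' parity profile $(1,1,1)$ with the $\alpha$-odd condition defining $I_E$-adjacency of a triple; everything else is Fourier bookkeeping parallel to the proof of Lemma \ref{FourLemma}. One could alternatively mimic the combinatorial approach of Lemma \ref{lem:inkernel} by tracking how each bipartition of an intersection contributes to the indicators in $T_{\sigma,\tau}$ and $T_{\sigma,\tau,\upsilon}$, but the Fourier route is cleaner and reuses infrastructure already developed in this section.
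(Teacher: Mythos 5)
Your proposal is correct and is essentially the paper's own proof: the paper computes $\Delta_\alpha(T)=\hat{T}(\alpha)-\hat{T}(\emptyset)$ for each of the four types and finds the factored forms $-2^{-n}(1-(-1)^{x_\alpha\cdot x_\sigma})(1-(-1)^{x_\alpha\cdot x_\tau})$ and its three-factor analogue, which encode exactly your case check (nonzero only when all parities are odd), leading to the same identification with non-$I_E$-adjacency.
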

\begin{proof}
    For an subset $\alpha \in I_E$, it will be convenient to define the function $\Delta_\alpha : {\cal T}_\mathbb{R} \to \mathbb{R}$ by
    \[
    \Delta_\alpha(T) = \hat{T}(\alpha) - \hat{T}(\emptyset) \; .
    \]
    If $G$ is the ordinary graph $\{V = H_1, E = H_2\}$, the space ${\cal T}_H$ is the subspace of ${\cal T}_G$ satisfying the additional condition that $\Delta_\alpha(T) = 0$ for any $\alpha \in I_E \backslash E$. For the four types of functions in Definition \ref{def:H4}, not yet imposing new the non-adjacency restrictions, a short calculation using (\ref{defFour}) shows that for any $\alpha$
    \begin{eqnarray*}
       \Delta_\alpha(T_0) &=& 0 \cr
       \Delta_\alpha(T_\sigma) &=& 0 \cr
       \Delta_\alpha(T_{\sigma,\tau}) &=& -{1 \over 2^n}(1 - (-1)^{x_\alpha \cdot x_\sigma})(1 - (-1)^{x_\alpha \cdot x_\tau}) \cr
       \Delta_\alpha(T_{\sigma,\tau,\upsilon}) &=& -{1 \over 2^n}(1 - (-1)^{x_\alpha \cdot x_\sigma})(1 - (-1)^{x_\alpha \cdot x_\tau})(1 - (-1)^{x_\alpha \cdot x_\upsilon})  
    \end{eqnarray*}    
    For functions $T_0$ and $T_\sigma$, $\Delta$ always vanishes, so all of these are in ${\cal T}_H$. The third expression vanishes if and only if one of $x_\sigma \cdot x_\alpha$ and $x_\tau \cdot x_\alpha$ is even. This will be true for every $\alpha \in I_E$ if and only if $\sigma$ and $\tau$ are non-$I_E$-adjacent. Similarly, given disjoint $\sigma$, $\tau$, and $\upsilon$, the fourth expression vanishes for every $\alpha \in I_E$ if and only if one of $x_\sigma \cdot x_\alpha$, $x_\tau \cdot x_\alpha$, and $x_\upsilon \cdot x_\alpha$ is even, so these three sets are non-$I_E$-adjacent. Thus, the functions in Definition \ref{def:H4} are exactly the subset of functions in Definition \ref{def:T14} satisfying the additional constraint $\Delta_\alpha(T) = 0$. 
\end{proof}

\begin{remark} (Implications for mutual information)
    For the mutual information $S(A_\sigma, A_\tau)$, the previous condition that $\sigma$ and $\tau$ are not connected by an edge translates to the requirement that $A_\sigma$ and $A_\tau$ do not share a codimension-one boundary. Proposition \ref{prop:adjacent} generalizes this to require that no subset $\{ij\}$ of any intersection has size 1 overlap with both $\sigma$ and $\tau$. This translates to the condition that the boundaries of $A_\sigma$ and $A_\tau$ cannot have any intersection, even of lower codimension (see however Remark \ref{caveat}). 
\end{remark}
\begin{remark} (Implications for tripartite information)
    For the tripartite information, the participating regions were previously unconstrained, but our new condition translates to the statement that  $A_\sigma$, $A_\tau$, and $A_\upsilon$ must not meet at any intersection that also involves a fourth region.
\end{remark}
Figure \ref{fig:Examples} shows examples of entropy combinations for which the divergences associated with codimension one boundaries cancel but divergences associated with higher-codimension intersections do not.
\begin{remark}
\label{caveat}
    The space of safe functions ${\cal T}_H$ corresponds to entropy sums that we expect to be finite for any collection of regions with the connectivity properties specified by $H$. However, there may be special cases where a potentially divergent entropy combination corresponding to a function not in ${\cal T}_H$ is actually finite. An example shown in Figure \ref{fig:Examples} (middle). Here, two regions intersect, but the intersection angle is zero; \cite{BuenoCasiniWitczak2019_SingularJHEP} showed that in some situations like this, the divergence can be absent. 
\end{remark}

\begin{figure}
    \centering
\includegraphics[width=0.9\textwidth]{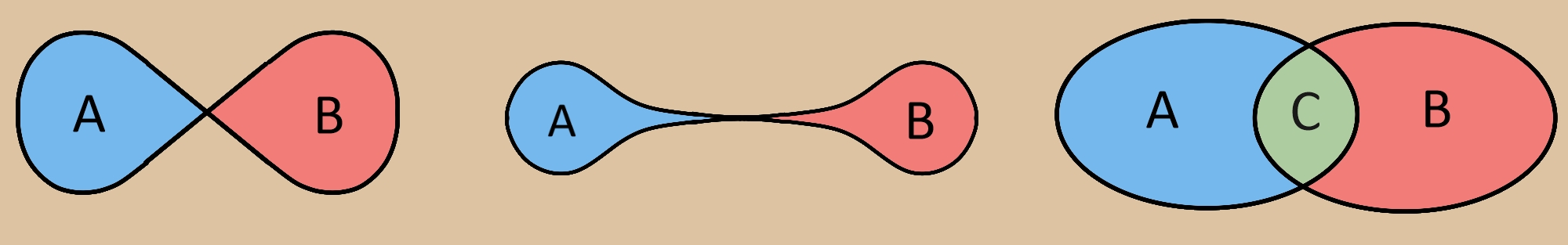}
    \caption{Left: mutual information $I(A,B)$ has cancelling divergences associated with codimension one boundaries, but uncancelled corner divergences. Middle: for corners that are sufficiently cuspy, these additional divergences may be absent. Right: the tripartite information $I(A,B,C)$ has cancelling divergences associated with codimension one boundaries, but uncancelled corner divergences.}
    \label{fig:Examples}
\end{figure} 
It is natural to ask whether there is a result analogous to Theorem \ref{thm:Möbius2}, that is, whether the functions in \ref{def:H4} span all of ${\cal T}_H$. It turns out that this is not true in general.
\begin{proposition}
    Let ${\cal T}_H^*$ be the span of the functions in Definition \ref{def:H4}. For certain cases, ${\cal T}_H^*$ is strictly smaller than ${\cal T}_H$.
\end{proposition}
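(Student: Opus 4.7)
The proof is by exhibiting an explicit counterexample, since the statement is an existence claim. My approach is to work in the Fourier-dual picture established in Section 4. Since ${\cal T}_H^* \subseteq {\cal T}_H$ by Proposition \ref{thm:Möbius2}, the strict inclusion is equivalent to $({\cal T}_H^*)^\perp \supsetneq {\cal T}_H^\perp$; by Theorem \ref{thm:Fourier2} this means finding a $T$ orthogonal to every function in Definition \ref{def:H4} whose Fourier support is not contained in $I_E$.

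Following the template of the proof of Lemma \ref{complemma} in the Fourier basis, orthogonality to $T_0$ forces $\sum_\sigma \hat T(\sigma) = 0$, orthogonality to every $T_\sigma$ forces $\hat T$ to vanish on all odd-order subsets, and the remaining conditions from $T_{\sigma, \tau}$ and $T_{\sigma, \tau, \upsilon}$, restricted to non-$I_E$-adjacent disjoint collections, reduce to a finite linear system on the remaining even-order coefficients $\hat T(\alpha)$. In the graph setting this linear system suffices to force $\hat T$ to vanish outside $I_E$ (Theorem \ref{thm:Möbius}), but in the hypergraph setting with severe non-$I_E$-adjacency restrictions the system can be underdetermined, leaving a nontrivial residual subspace of $({\cal T}_H^*)^\perp$ whose Fourier support extends beyond $I_E$.

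The concrete step is a search over small hypergraphs, in the spirit of the numerical explorations described in the paper's introduction. The natural regime is $|I_E| < 2^{n-1}$, so that $\dim {\cal T}_H = 2^n - |I_E| + 1$ exceeds the count $2^{n-1}+1$ of independent $T_0$ plus $T_\sigma$ functions and additional pairs or triples are required; one then seeks an $H$ whose non-$I_E$-adjacency restriction removes enough of these additional generators that the gap cannot be closed. Candidate hypergraphs on $n = 5$ or $n = 6$ vertices with one or two large intersections, optionally augmented by a few edges, are natural starting points, and verification for each candidate is a matter of computing the rank of the matrix whose rows are the Definition \ref{def:H4} generators and comparing with $2^n - |I_E| + 1$.

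The main obstacle is the absence of a clean structural characterization of $\dim {\cal T}_H^*$; unlike $\dim {\cal T}_H$, which has the closed form $2^n - |I_E| + 1$, the dimension of ${\cal T}_H^*$ depends subtly on the combinatorics of $I_E$-adjacency. Consequently the counterexample cannot be read off from the hypergraph structure alone, and the proof reduces to identifying a particular configuration and verifying the dimension mismatch directly.
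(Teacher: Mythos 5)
Your setup is sound --- you correctly reduce the problem to finding a hypergraph where the surviving generators of Definition \ref{def:H4} fail to reach $\dim {\cal T}_H = 2^n - |I_E| + 1$, and you even identify the right regime ($|I_E| < 2^{n-1}$ while the non-$I_E$-adjacency condition kills the pair and triple generators). But the proposal stops exactly where the proof has to happen: this is an existence statement, so you must actually exhibit a hypergraph and verify the dimension gap, and you explicitly defer that to an unperformed computer search, conceding that ``the counterexample cannot be read off from the hypergraph structure alone.'' As written, there is no witness and hence no proof.

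The gap is closed by a clean hand construction, no rank computation needed: take $n \ge 6$ vertices and let ${\cal I}_H$ consist of \emph{all} subsets of orders $2$ and $4$. Then every disjoint pair $\sigma,\tau$ is $I_E$-adjacent (pick one vertex from each to get an order-$2$ element of $I_E$ meeting both in odd size), so no $T_{\sigma,\tau}$ survives. Every disjoint triple whose union misses some vertex is $I_E$-adjacent (one vertex from each plus one outside gives an order-$4$ element of $I_E$), and for a triple whose union is all of $V$ one checks
\[
T_{\sigma,\tau,\upsilon} = T_V + T_0 + T_\sigma + T_\tau + T_\upsilon ,
\]
so the surviving tripartite functions add nothing. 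Hence $\dim {\cal T}_H^* \le 2^{n-1}+1$, while $\dim {\cal T}_H = 2^n - \binom{n}{2} - \binom{n}{4}$, which exceeds $2^{n-1}+1$ for $n \ge 6$ (e.g.\ $34 > 33$ at $n=6$). Note that your suggested candidates (``one or two large intersections'') would likely not work --- a single intersection equal to $V$ gives $|I_E| = 2^{n-1}$ and the $T_0$, $T_\sigma$ generators already span ${\cal T}_H$ --- so the example really does require a dense family of small intersections, which is worth internalizing before launching a search.
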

\begin{proof}
We construct a specific example. Consider the hypergraph with $n$ vertices and all possible intersections of orders 2 and 4, with no other intersections. Then all disjoint pairs $\sigma,\tau$ will be $I_E$ adjacent, since taking an arbitrary element of each gives a set $\alpha \in I_E$. So no $T_{\sigma,\tau}$ is in 
${\cal T}_H$. For any triple $\sigma,\tau, \upsilon$ whose union is not the whole graph, we can define a set $\alpha$ by taking an element of each of $\sigma,\tau, \upsilon$ and a single element not in any of these. Then $\alpha \in I_E$, so $\sigma,\tau, \upsilon$ are $I_E$-adjacent. For any triple $\sigma,\tau, \upsilon$ whose union is the whole graph, we have
\[
T_{\sigma,\tau, \upsilon} = T_G + T_0 + T_\sigma + T_\tau + T_\upsilon \; .
\]
so this function is not independent of the type 1 and 2 functions. Since there is one function of type 1 and $2^{n-1}$ functions of type 2, the dimension of ${\cal T}_H^*$ in this case is at most $2^{n-1}+1$. On the other hand, the dimension of ${\cal T}_H$ is $2^n - |I_E| + 1 = 2^n -  \binom{n}{2} - \binom{n}{4}$. This is larger than $2^{n-1}+1$ for $n \ge 6$, so in this case ${\cal T}_H^*$ is smaller than ${\cal T}_H$.
\end{proof}
Of course, since ${\cal T}_H \subset {\cal T}_G = {\cal T}_G^*$, it is possible to build a basis of ${\cal T}_H$ using linear combinations of information basis functions. But the result implies that some of the basis elements will need to be linear combinations of information basis functions that are not individually in ${\cal T}_H$. 

\section{An equivalent quantum mechanics problem}

In this section, we note that there is an alternative characterization of functions in ${\cal T}_G$ as the space of ground state wavefunctions for a certain Hamiltonian for $n$ qubits. A similar characterization for ${\cal T}_H$ is possible, but we omit it here.

For a graph $G$ with $n$ vertices, define ${\cal H}_G$ to be the vector space of functions from subsets of vertices to $\mathbb{C}$ with inner product
\[
(T_1,T_2) = \sum_{\sigma} T_1(\sigma)^* T_2(\sigma) \; .
\]
This is the Hilbert space for a quantum mechanical system of $n$ qubits that we associate with the vertices of the graph.

We can identify an element $[\sigma]$ of the indicator basis (Definition \ref{def:indicator}) with an element $|\sigma \rangle$ of the standard $S_z$ eingenbasis where the $i$th spin is up iff $i \in \sigma$. That is 
\[
|\sigma \rangle \equiv \otimes_{i \in \sigma} |\uparrow \rangle \otimes_{i \notin \sigma} | \downarrow \rangle \; .
\]
We are interested in the subspace ${\cal H}^G_0$ for which the edge sums vanish. For an edge $e$ between vertices $i$ and $j$, we require that the sum of the coefficients of all $S_z$ basis elements with opposite spins for $i$ and $j$ vanish. This is the same as the vanishing of the inner product with the state (related to the states $\hat{\mu}_e$ above)
\[
|ij\rangle = (|\uparrow \rangle_i \otimes |\downarrow \rangle_j +  |\downarrow \rangle_i \otimes |\uparrow \rangle_j) \otimes_{k \ne i,j} (|\uparrow \rangle + |\downarrow \rangle )= P_{ij} |\Omega \rangle
\]
where $P_{ij}$ is a two-qubit gate on the $i$ and $j$ spins that projects on to the subspace generated by basis elements with opposite spins and $|\Omega \rangle$ is the state with every spin in the state $|\uparrow \rangle + |\downarrow \rangle$. Thus, our condition is
\[
\langle ij|\Psi \rangle = \langle \Omega | P_{ij} | \Psi \rangle = 0 \;  \qquad (\forall \{i,j\} \in E)
\]
So ${\cal H}_0$ is the orthogonal complement of the subspace generated by  the states $|ij \rangle$.

Defining another Hilbert space ${\cal H}_e$ with basis elements $|\bar{ij} \rangle$ associated with edges, we can say that the subspace we are interested in is the kernel of the map from ${\cal H}$ to ${\cal H}_e$ defined by the operator (closely related to ${\cal E}$ above)
\[
{\cal O}_G = \sum_{\{i,j\} \in E}|\bar{ij} \rangle \langle ij| \; .
\]
Equivalently, it is the kernel of the the positive operator ${\cal O}_G^\dagger {\cal O}_G$ that maps ${\cal H}$ to itself.\footnote{We thank Abhisek Sahu for this observation.} Thus, we can say that ${\cal H}_0$ is the space of ground states of the Hamiltonian
\[
H_G = {\cal O}_G^\dagger {\cal O}_G  =  \sum_{\{i,j\} \in E} |ij \rangle \langle ij | = \sum_{ij} P_{ij} P_{\Omega} P_{ij}  \; .
\]
An alternative form (up to normalization) is 
\[
H_G =  \sum_{\{i,j\} \in E} (I + X_i X_j + Y_i Y_j - Z_i Z_j) \otimes_{k \ne i,j} (I + X_k)
\]
where $X$, $Y$, and $Z$ are the standard Pauli operators and $I$ is the identity. The results of section 3 give us two different explicit ways to construct the space of ground states for this Hamiltonian.

\section{Algebraic definitions for divergence-free entropy sums}

The divergence-free entropy combinations that we have discussed are often defined by introducing a regulator (either in the theory or in the quantity itself) and then taking a limit where the regulator is removed. However, by using the language of algebraic quantum field theory, we can in some cases give independent definitions of the finite quantities that don't make reference to divergent entropies. See \cite{Haag1996, BratteliRobinson1997} for operator-algebraic preliminaries.
\begin{enumerate}
    \item The entropy of a region $A$ minus the entropy of its complement in a state $\rho$ can be alternatively written by choosing an arbitrary reference state $\sigma$ and defining 
    \[
S(A)-S(A^{c})
=\big\langle K^{\sigma}_{A}-K^{\sigma}_{A^{c}}\big\rangle_{\rho}
-\Big(S(\rho_A\Vert\sigma_A)-S(\rho_{A^{c}}\Vert\sigma_{A^{c}})\Big)\,
\]
Here, the first term is the expectation value of the full modular operator associated with the reference state and the second term is a difference of relative entropies, which are well-defined and finite even when we have a nontrivial boundary and the algebra is type III. Choosing $\sigma = \rho$, this reduces to
\[
S(A)-S(A^{c})
=\big\langle K^{\rho}_{A}-K^{\rho}_{A^{c}}\big\rangle_{\rho}
\]
the value assigned to the full modular Hamiltonian associated with region $A$ by the full state.
    \item 
    Assuming that the algebras ${\cal A}$ and ${\cal B}$ associated with regions $A$ and $B$ satisfy the split property (i.e. there exists a type I factor ${\cal N}$ such that ${\cal A} \subset {\cal N} \subset {\cal B}'$ where ${\cal B}'$ is the commutant of ${\cal B}$), there is an isomorphism between the algebra ${\cal A} \vee {\cal B}$ and the von Neumann tensor product algebra ${\cal A} \bar{\otimes} {\cal B}$. Thus, starting from a global state $\omega$ and the restrictions $\omega_{AB}$, $\omega_B$, and $\omega_A$ to the subalgebras ${\cal A} \vee {\cal B}$, ${\cal A}$, and ${\cal B}$, we can construct a state $\omega_A \bar{\otimes} \omega_B$ on ${\cal A} \vee {\cal B}$ that is the image of the product state $\omega_A \otimes \omega_B$ under the isomorphism. The mutual information $I(A,B)$ can be defined as the Araki relative entropy
    \[
    I(A,B) =S(\omega_{AB}|| \omega_A \bar{\otimes} \omega_B) \; 
    \]
    (see \cite{Araki1976} for relative entropy on von Neumann algebras and \cite{Fewster2016} for an exposition of the split property and its consequences).
    \item 
    With at least one non-adjacent region, we can express the tripartite information in terms of mutual informations as described in the introduction, and thus define it in terms of relative entropies as above. It seems reasonable to expect that the tripartite information (for cases where there are no higher-codimension divergences) can also be expressed somehow in terms of relative entropies in the general case, but we are not aware of an explicit construction that works when each region is adjacent to both of the other regions. This seems to be an interesting open question.
\end{enumerate}

\section*{Acknowledgements}

I would like to thank Jim Bryan, Pompey Leung, Abhisek Sahu, Jeremy van der Heijden, and Rana Zibakhsh for discussions.
I acknowledge support from the National Science and Engineering Research Council of Canada
(NSERC) and the Simons foundation via a Simons Investigator Award. 

\bibliographystyle{jhep}
\bibliography{refs}

\end{document}